\documentclass[a4paper,superscriptaddress,twocolumn,nofootinbib,accepted=2020-04-20]{quantumarticle}
\pdfoutput=1
\usepackage[utf8]{inputenc}
\usepackage[english]{babel}
\usepackage[T1]{fontenc}
\usepackage{amsmath}
\usepackage{hyperref}
\usepackage[numbers,sort&compress]{natbib}

\usepackage{tikz}
\usepackage{amsmath}
\usepackage{amssymb}
\usepackage{epsfig}
\usepackage{color,epstopdf}
\usepackage{amsthm, thm-restate}
\usepackage{hyperref}
\usepackage{subcaption}
\pdfminorversion=7

\usepackage{hyperref,amsthm,mathtools,braket}
\hypersetup{colorlinks=true,linkcolor=blue,citecolor=blue}

\DeclareMathOperator{\Tr}{Tr}

\newcommand{\ketbra}[2]{\ensuremath{\left|#1\right\rangle\!\left\langle#2\right|}}
\renewcommand{\braket}[2]{\ensuremath{\!\left\langle#1 \vphantom{#2}\middle| #2 \vphantom{#1}\right\rangle \!}}

\DeclareSymbolFont{bbold}{U}{bbold}{m}{n}
\DeclareSymbolFontAlphabet{\mathbbold}{bbold}
\newcommand{\iden}{\mathbbold{1}}

\newcommand{\chn}{\mathcal{D}}
\newcommand{\NN}{\mathcal{N}}
\newcommand{\UU}{\mathcal{U}}
\newcommand{\Id}{\mathbb{I}}

\makeatletter
\newcommand*{\inlineequation}[2][]{%
  \begingroup
    \refstepcounter{equation}%
    \ifx\\#1\\%
    \else
      \label{#1}%
    \fi
    \relpenalty=10000 %
    \binoppenalty=10000 %
    \ensuremath{%
      #2%
    }%
    ~\@eqnnum
  \endgroup
}
\makeatother

\theoremstyle{plain}
\newtheorem{thm}{Theorem}
\newtheorem{lem}[thm]{Lemma}

\theoremstyle{definition}

\theoremstyle{remark}

\begin{document}

\title{Contextual advantage for state-dependent cloning}
\author{Matteo Lostaglio}
\affiliation{ICFO-Institut de Ciencies Fotoniques, The Barcelona Institute of Science and Technology, Castelldefels (Barcelona), 08860, Spain}
\affiliation{QuTech, Delft University of Technology, P.O. Box 5046, 2600 GA Delft, The Netherlands}
\author{Gabriel Senno}
\affiliation{ICFO-Institut de Ciencies Fotoniques, The Barcelona Institute of Science and Technology, Castelldefels (Barcelona), 08860, Spain}

\begin{abstract}
A number of noncontextual models exist which reproduce different subsets of quantum theory and admit a no-cloning theorem. Therefore, if one chooses noncontextuality as one's notion of classicality, no-cloning cannot be regarded as a nonclassical phenomenon. In this work, however, we show that there are aspects of the phenomenology of quantum state cloning which are indeed nonclassical according to this principle. Specifically, we focus on the task of state-dependent cloning and prove that the optimal cloning fidelity predicted by quantum theory cannot be explained by any noncontextual model. We derive a noise-robust noncontextuality inequality whose violation by quantum theory not only implies a quantum advantage for the task of state-dependent cloning relative to noncontextual models, but also provides an experimental witness of noncontextuality.
\end{abstract}

\maketitle

An important guiding principle for quantum theorists is the identification of genuine nonclassical effects certified by rigorous theorems. Given a quantum phenomenon, the relevant question is: Are there classical models able to reproduce the observed operational data? Here we investigate this question in the context of a cloning experiment.  

The no-cloning theorem \cite{wootters1982single, dieks1982communication, park1970concept} is widely regarded as a central result in quantum theory. Informally, the theorem states the impossibility of copying quantum information, and is contrasted with the fact that classical information, on the other hand, can be perfectly copied. More precisely, there is no machine (formally, a quantum channel) that can take two distinct and nonorthogonal states $\{\ket{\psi_1},\ket{\psi_2}\}$ sent at random as inputs and output the corresponding copies $ \{\ket{\psi_1}\otimes \ket{\psi_1},\ket{\psi_2}\otimes \ket{\psi_2}\}$ \cite{yuen1986amplification}.

While no-cloning is often regarded as an intrinsically quantum feature, one would like to back that claim by a precise theorem stating what operational features cannot be explained within classical models. The theorem should hence define a precise notion of
`classicality' and show that such notion leads to operational predictions incompatible with the relevant quantum statistics \cite{schmid2018contextual}.  
At the operational level, we can schematically think of an experiment as a set of black-boxes each corresponding to certain sets of operational instructions. At the ontological level we look for theoretical explanations of the empirical data within the framework of \emph{ontological models}. This is a very broad class of models involving an arbitrary set of physical states evolving according to some laws and ultimately determining (the probabilities of) the measurement outcomes. This analysis forces us to look for any plausible alternative explanation of the empirical data collected in a quantum experiment before we certify it as ``nonclassical''. But, which ontological models should be deemed ``classical"?

Clearly, the
broader the chosen notion of classicality is, the stronger the resulting no-go theorem is.  Since the scenario of quantum cloning does not feature space-like separated measurements, we need a different notion of `classicality' than the ubiquitous Bell's locality. Hence, in this work we identify nonclassical features as those that cannot be explained within any noncontextual model, in the generalized sense introduced in Ref.~\cite{spekkens2005contextuality}. It is a known fact that, with respect to this broad notion, no-cloning by itself should not be regarded as a nonclassical phenomenon. There are, in fact, several examples of noncontextual models for subsets of quantum theory with a no-cloning theorem \cite{bartlett2012reconstruction,spekkens2007evidence}. The mechanism behind no-cloning in noncontextual theories is simple: non-orthogonal quantum states $\ket{\psi_1}$, $\ket{\psi_2}$ correspond to overlapping probability distributions $\mu_1(\lambda)$, $\mu_2(\lambda)$ over the posited set of physical states $\lambda$ and there is no deterministic nor stochastic process mapping $\{\mu_1, \mu_2\}$ to $\{\mu_1 \otimes \mu_1, \mu_2 \otimes \mu_2\}$ \cite{daffertshofer2002classical}. The existence of these models proves that no-cloning cannot be interpreted as a nonclassical phenomenon when the notion of classicality is taken to be that of noncontextuality.\footnote{Crucially, cloning should be distinguished from the notion of broadcasting. Broadcasting only requires the creation of a joint distribution with marginals $\mu_i$ and can be done perfectly by a generalized CNOT.} 
Hence, we need to look more closely at the phenomenology of quantum cloning if we are to identify aspects of it that are nonclassical according to the principle of noncontextuality.

In this work, we identify a strongly nonclassical aspect in the ultimate limits of imperfect cloning. The question of what is the best fidelity with which a given set of quantum states can be cloned has been widely studied since the pivotal work of Bu\v{z}ek and Hillary in 1996 \cite{buvzek1996quantum} (for a review on quantum cloning, see, \emph{e.g.}, Ref.~\cite{scarani2005cloning}). 
We find that the optimal fidelity predicted
by quantum theory for the cloning of two distinct, non orthogonal pure states cannot be reproduced by any noncontextual model which complies with the operational phenomenology featured in a quantum
cloning experiment.
 Specifically, contextuality provides an advantage to the maximum copying fidelity.
Our result
directly links contextuality to a quantum advantage \cite{schmid2018contextual,saha2019state,tavakoli2020measurement}.

\section{Noncontextual ontological models of operational theories} 

At the operational level, we can schematically think of an experiment as a set of black-boxes each corresponding to certain sets of operational instructions.\footnote{While empirical data is always to some degree \emph{theory-laden}, the word ``operational'' here signifies that we are striving towards the ideal of the most low-level instructions we can imagine (e.g. press this button, write down an outcome when a corresponding light flashes etc.). This is to be opposed to high-level instructions that refer to theoretical entities, such as ``lower the potential barrier in which the electron is trapped''.} We can distinguish three kinds of black-boxes:
	\begin{enumerate}
		\item A preparation black-box $P_s$ initialises the system;
		\item A transformation black-box $T$ takes in a system prepared according to $P_s$ and transforms it into some new preparation, denoted $T(P_s)$.
		\item  A measurement black-box $M_{s'}$ takes a preparation $P_{s}$ as input and returns an outcome $x$ with probability $p(x|P_{s},M_{s'})$.
		\item An experiment consists of collecting the statistics $p(x|T(P_s),M_{s'})$ for various choices of the black boxes $P_s$, $T$ and $M_{s'}$.
	\end{enumerate}

 The set of $P_s$, $T$, $M_{s'}$ and corresponding observed statistics $p(x|T(P_s),M_{s'})$ are the defining elements of an \emph{operational theory}. Noncontextuality is a restriction on the ontological models that try to explain the statistics of some operational theory. An ontological model for an operational theory is one which \cite{leifer2014quantum}:

\begin{enumerate}
\item Makes every preparation $P_s$ correspond to sampling from a probability distribution $\mu_{s}(\lambda)$ over some set of ontic variables $\lambda$. $\lambda$s are referred to as `hidden variables' in the context of Bell nonlocality and they form a (measurable) set $\Lambda$.
\item Represents transformations by matrices $T(\lambda'|\lambda)$ of transition probabilities ($T(\lambda'|\lambda) \geq 0$, $\int d\lambda' T(\lambda'|\lambda) =1$ $\forall \lambda$) acting on the corresponding probability density.
\item Represents a measurement  $M_{s'}$ by a response function  $\xi_{s'}(x|\lambda)$  giving the probability of outcome $x$ given that the hidden variable takes the value $\lambda$ (\mbox{$\xi_{s'}(x|\lambda)\geq 0$}, \mbox{$\sum_x \xi_{s'}(x|\lambda)=1$ $\forall \lambda$}).
\end{enumerate}
An ontological model then defines its predictions as
\begin{equation}\label{eq:probability-rule}
p(x|T(P_s),M_{s'}) = \int d\lambda d\lambda' \mu_s(\lambda) T(\lambda'|\lambda )\xi_{s'}(x|\lambda').
\end{equation}

Two operational procedures (be them preparations, measurements or transformations) are said to be \emph{operationally equivalent} if they cannot be distinguished by any experiment. Noncontextuality, in the generalized form introduced in \cite{spekkens2005contextuality}, is a restriction to ontological models requiring that \emph{if two procedures are operationally equivalent, they must be represented by the same object in the ontological model}. This notion can be seen as an extension of the traditional one of Kochen-Specker~\cite{kochen1975problem,spekkens2005contextuality}. 

In this work we will be concerned with operational equivalences only at the level of preparations. Two preparations $P_s$ and $P_{s'}$ are operationally equivalent if they cannot be distinguished by any measurements:
\begin{equation*}
p(x|P_s,M) = p(x|P_{s'},M), \quad \forall M,
\end{equation*}
which, for short, we will denote by $P_s \simeq P_{s'}$. The assumption of (preparation) noncontextuality is then
\begin{equation}
\label{eq:preparationNC}
P_s \simeq P_{s'} \Rightarrow \mu_s(\lambda) = \mu_s(\lambda').
\end{equation}
This principle can be understood as an `identity of the indiscernibles' and, together with locality, it can be seen as a successful methodological principle for theory construction \cite{spekkens2019ontological}. Examples of noncontextual ontological models include classical Hamiltonian mechanics, Hamiltonian mechanics with a resolution limit on phase space \cite{bartlett2012reconstruction} and Spekken's toy model \cite{spekkens2007evidence}.

\begin{figure}[t]
	\centering
	\includegraphics[width=0.8\linewidth]{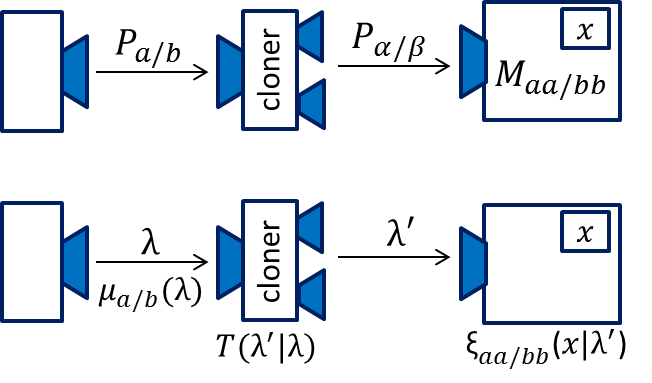}
	\caption{\emph{Cloning experiment.} Top: black-box of the cloning protocol; one of two preparation procedures $P_x$, $x=a,b$ is performed with equal probability, the resultant state is sent through a cloning machine (independent of $x$), which respectively prepares $P_\gamma$, $\gamma=\alpha,\beta$; a test measurement $M_{xx}$ for the target preparation $P_{xx}$ is performed and passed with probability $P(M_{aa}|P_\alpha)$ (or $P(M_{bb}|P_\beta)$). Bottom: ontological description of the same experiment, where preparing $P_x$ corresponds to sampling $\lambda$ with probability $\mu_x(\lambda)$, the cloning machine maps $\lambda \mapsto \lambda'$ with probability $T(\lambda'|\lambda)$ and $M_{xx}$ gives a `pass' outcome with probability $\xi_{xx}(1|\lambda')$.} 
	\label{fig:scheme}
\end{figure}

\section{Operational features of quantum cloning - ideal scenario}

We now describe the operational features of optimal state-dependent quantum cloning which, as we will show, are impossible to explain with noncontextual models (see also Fig.~\ref{fig:scheme}). We will make the assumption that certain perfect correlations are observed, but we will later remove these idealizations. For all two-outcome measurements $M_s$ we will use the shortcuts $p(x=1|P,M_s) \equiv p(M_s|P)$ and $\xi_{s}(1|\lambda')\equiv \xi_{s}(\lambda')$.

Let $P_a$ and $P_b$ denote the experimental procedures followed to prepare the states $\ket{a}$ and $\ket{b}$ to be cloned. As an operational signature of the fact that $\ket{a}$ and $\ket{b}$ are two pure and, in general, nonorthogonal states, we consider the `test measurements' $M_a$, $M_b$, with outcomes  $x \in \{0,1\}$, giving the operational statistics $p(M_a|P_a) = p(M_b|P_b) = 1$.  In the quantum formalism, this statistics is reproduced by performing the projective measurements $\{ \ketbra{a}{a}, \iden - \ketbra{a}{a}\}$ and $ \{ \ketbra{b}{b}, \iden - \ketbra{b}{b}\}$ (with $x=1$ corresponding to the first outcome). We will use the notation $c_{ab} := p(M_{b}|P_{a})$, which is called `confusability' in Ref.~\cite{schmid2018contextual}, for the probability of observing the first outcome of the $M_b$ measurement when the system is initialized according to $P_a$. Clearly, in the ideal quantum experiment one observes $c_{ab} = |\braket{a}{b}|^2$. 

The two preparations $P_a$, $P_b$ go through a cloning machine $T$, which outputs new preparations $P_\alpha = T(P_a)$, $P_\beta=T(P_b)$. In quantum theory, the optimal-state dependent cloning operation is a unitary $U$ and, hence, the preparations $P_\alpha$ and $P_\beta$ correspond to pure states $\ket{\alpha} := U \ket{a \, 0}$, $\ket{\beta} := U \ket{b \, 0}$ respectively, with $\ket{0}$ the initial state of some ancillary register. Operationally, and similarly to the discussion above, the purity of the outputs implies that we can perform test measurements $M_{\alpha}$, $M_{\beta}$ satisfying $p(M_\alpha|P_\alpha) = 1$, \mbox{$p(M_\beta|P_\beta) = 1$} (again, by performing the measurements described in the quantum formalism as $\{ \ketbra{\alpha}{\alpha}, \iden - \ketbra{\alpha}{\alpha}\}$ and $ \{ \ketbra{\beta}{\beta}, \iden - \ketbra{\beta}{\beta}\}$). 

The experiment ends by testing what the fidelity between the output and the ideal clone is. To do so, given the ideal clones $P_{aa}$, $P_{bb}$ we introduce test-measurements $M_{aa}$, $M_{bb}$ and assume one observes the statistics $p(M_{aa}|P_{aa}) = p(M_{bb}|P_{bb}) = 1$, $p(M_{bb}|P_{aa}) = |\braket{aa}{bb}|^2 = |\braket{a}{b}|^4.$ In a quantum experiment this is realized by preparing states $\ket{aa}$, $\ket{bb}$ and performing the projective measurements $\{ \ketbra{aa}{aa}, \iden - \ketbra{aa}{aa}\}$, $ \{ \ketbra{bb}{bb}, \iden - \ketbra{bb}{bb}\}$.

Then, denoting by $c_{\alpha aa} := P(M_{aa}|P_{\alpha})$, $c_{\beta bb} := P(M_{bb}|P_{\beta})$,  the (global) cloning fidelity is operationally defined to be $$F_g:=\frac{1}{2}c_{\alpha aa} + \frac{1}{2} c_{\beta bb},$$ i.e., the average probability that the imperfect clones $P_\alpha$ and $P_\beta$ pass the corresponding test measurements for the ideal clones, $M_{aa}$ and $M_{bb}$ respectively. In quantum theory, the optimal cloning unitary achieves~\cite{bruss1998optimal}
\begin{align}
\label{eq:quantumbound}
F^{\rm Q,opt}_{g} &:= \frac{1}{4} \left[ \sqrt{(1+c_{ab})(1+ \sqrt{c_{ab}})} \right. \nonumber \\
&\qquad\qquad \left.+ \sqrt{(1-c_{ab})(1-\sqrt{c_{ab}})}\right]^2,
\end{align}
with $c_{ab} = | \braket{a}{b}|^2$.

This brief summary captures the main operational features 
of the traditional `optimal state-dependent cloning' and highlights the main issue with this
approach: it leaves no room to leverage operational equivalences to further study its  potential non-classical
aspects. To fix that, we follow Ref.~\cite{schmid2018contextual} and exploit another operational consequence of the purity of $\ket{a}$, $\ket{b}$: the existence of preparations $P_{a^\perp}$, $P_{b^\perp}$ satisfying $p(M_a|P_{a^\perp}) =  p(M_b|P_{b^\perp}) = 0$ and such that the mixture $ P_a/2 + P_{a^\perp}/2$ (tossing a fair coin and following either $P_a$ or $P_{a^\perp}$) is operationally equivalent to the mixture $ P_{b}/2 + P_{b^{\perp}}/2$: $P_a/2 + P_{a^\perp}/2 \simeq  P_{b}/2 + P_{b^{\perp}}/2$.  In the idealized quantum experiment one observes this operational statistics by preparing pure states $\ket{a^\perp}$, $\ket{b^\perp}$ in the span of \{$\ket{a}$, $\ket{b}$\} and satisfying $\braket{a}{a^\perp} = \braket{b}{b^\perp}=0$ as well as $\frac{1}{2} \ketbra{a}{a} +  \frac{1}{2} \ketbra{a^\perp}{a^\perp} = \frac{1}{2} \ketbra{b}{b} +  \frac{1}{2} \ketbra{b^\perp}{b^\perp}$. The same discussion can be repeated for each of the pairs $\{(a,b), (\alpha, aa), (\beta, bb) \}$.

To conclude, here is an operational account  (without any reference to quantum theory) of the features that we demand are observed in the idealized scenario of the cloning experiment: there exists $ P_s, P_{s^\perp}, M_s$ such that 
\begin{enumerate}
	\item[O1] $p(M_{s}|P_s) = 1$, $p(M_s|P_{s^\perp}) = 0$ for $s = a,b,\alpha,\beta,aa,bb$.
	\item[O2] $\frac{1}{2} P_s + \frac{1}{2} P_{s^\perp} \simeq \frac{1}{2} P_{s'}+ \frac{1}{2} P_{s^{'\perp}}$, for all $(s,s')$ in $\{(a,b), (\alpha, aa), (\beta, bb) \}$.
\end{enumerate}

\section{Optimal cloning is contextual - ideal scenario}

In any ontological model, a cloning experiment is described as follows (see Fig.~\ref{fig:scheme}). A preparation device randomly prepares either $P_a$ or $P_b$, i.e., it samples a $\lambda$ from either the distribution $\mu_a(\lambda)$ or $\mu_b(\lambda)$. This state is sent into the cloning machine that maps $\lambda$ into some new $\lambda'$ with probability $T(\lambda'|\lambda)$. For example, if $\lambda = (x_1,p_1)$ one could have $\lambda' = (x'_1,p'_1,x'_2,p'_2)$. This $\lambda'$ is sent into a testing device doing the measurement $M_{aa}$ if $P_a$ was prepared, or $M_{bb}$ if $P_b$ was prepared. Upon receiving $\lambda'$, the device gives an outcome $x$ with probability $\xi_{aa}(\lambda')$ or $\xi_{bb}(\lambda')$.

The assumption of noncontextuality (more precisely, preparation noncontextuality \cite{spekkens2005contextuality}) and linearity applied to the operational equivalences in O2 requires that  any noncontextual ontological model must satisfy (see Eq.~\eqref{eq:preparationNC})
\begin{equation}
\label{eq:noncontextualityconsequence}
\frac{1}{2} \mu_s(\lambda) + \frac{1}{2} \mu_{s^\perp}(\lambda) = \frac{1}{2} \mu_{s'}(\lambda) + \frac{1}{2} \mu_{s^{'\perp}}(\lambda),
\end{equation}
for all $(s,s')$ in $\{(a,b), (\alpha, aa), (\beta, bb) \}$ and $\lambda \in \Lambda$. Our main result is that no noncontextual ontological model can reproduce the operational features listed O1-O2 and match the optimal cloning fidelity predicted by quantum theory. More precisely:

\begin{thm}[Optimal cloning fidelity in noncontextual models]
\label{thm:cloningcontextual-ideal}
Let $P_\alpha=T(P_a)$, $P_{\beta}=T(P_a)$ be the achieved outputs of a cloning process with inputs $P_a$, $P_b$ and target outputs $P_{aa}$, $P_{bb}$. Suppose one observes the operational features O1-O2.
Then, for any noncontextual model we have that
\begin{align}\label{eq:ncbound-ideal}
F_g \leq F^{\rm NC}_g = ~ 1- \frac{c_{ab}}{2} + \frac{c_{aa,bb}}{2}.
\end{align}
\end{thm}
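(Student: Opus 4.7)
The plan is to reduce the statement to two elementary properties of the total variation distance, namely its contractivity under stochastic maps and the triangle inequality. The bridge is a structural lemma stating that, under O1 and preparation noncontextuality applied to O2, every confusability appearing in the problem coincides with the classical overlap of the two associated ontic distributions,
$c_{ss'} = \omega(\mu_s,\mu_{s'}) := \int \min\{\mu_s(\lambda),\mu_{s'}(\lambda)\}\,d\lambda,$
for each $(s,s')\in\{(a,b),(\alpha,aa),(\beta,bb),(aa,bb)\}$.

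To prove this lemma I would argue as follows. By O1 together with $0\le\xi_s\le 1$, $\xi_s=1$ holds $\mu_s$-almost everywhere and $\xi_s=0$ holds $\mu_{s^\perp}$-almost everywhere, so $\mu_s$ and $\mu_{s^\perp}$ live on essentially disjoint subsets of $\Lambda$. Preparation noncontextuality applied to O2 then gives $\mu_s+\mu_{s^\perp}=\mu_{s'}+\mu_{s'^\perp}$ for each relevant pair. Comparing pointwise yields that on $\mathrm{supp}(\mu_s)\cap\mathrm{supp}(\mu_{s'})$ the two distributions agree, $\mu_s=\mu_{s'}$, while on $\mathrm{supp}(\mu_s)\setminus\mathrm{supp}(\mu_{s'})$ one has $\mu_{s'^\perp}=\mu_s>0$ and therefore $\xi_{s'}=0$. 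Integrating collapses $c_{ss'}=\int \xi_{s'}\mu_s\,d\lambda$ to $\int_{\mathrm{supp}(\mu_s)\cap\mathrm{supp}(\mu_{s'})}\mu_s\,d\lambda = \omega(\mu_s,\mu_{s'})$.

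With this lemma in hand, the cloning fidelity becomes $F_g=\tfrac{1}{2}[\omega(\mu_\alpha,\mu_{aa})+\omega(\mu_\beta,\mu_{bb})]$. Since $\omega(\mu,\nu)=1-d(\mu,\nu)$ for $d$ the total variation distance, the claimed inequality is equivalent to $d(\mu_{aa},\mu_{bb}) \le d(\mu_{aa},\mu_\alpha)+d(\mu_\alpha,\mu_\beta)+d(\mu_\beta,\mu_{bb})$. This follows by combining two applications of the triangle inequality with the contractivity $d(\mu_\alpha,\mu_\beta)\le d(\mu_a,\mu_b)$ of the stochastic cloning kernel $T$, after which the identifications $d(\mu_a,\mu_b)=1-c_{ab}$ and $d(\mu_{aa},\mu_{bb})=1-c_{aa,bb}$ yield $F^{\rm NC}_g=1-\tfrac{c_{ab}}{2}+\tfrac{c_{aa,bb}}{2}$.

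The main obstacle is the structural lemma: it is the only step where noncontextuality is actually used, and some care is required to make the ``disjoint support'' argument rigorous on an arbitrary measurable space, the correct statement being in terms of $\mu$-almost-everywhere equalities rather than literal supports. Once that is dispatched, the rest is a clean consequence of well-known properties of the total variation distance and requires essentially no new input.
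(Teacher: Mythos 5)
Your overall strategy is the same as the paper's: a structural lemma converting confusabilities into classical overlaps (equivalently, $\ell_1$ distances) via O1 plus preparation noncontextuality applied to O2, followed by the triangle inequality and contractivity of the total variation distance under the stochastic kernel $T$. For the three pairs that actually appear in O2, your derivation of $c_{ss'}=\omega(\mu_s,\mu_{s'})$ is correct and is precisely the paper's Eq.~\eqref{eq:confusabilitytracenorm-ideal}.

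There is, however, one unjustified step: you assert the lemma also for the pair $(aa,bb)$, but O2 contains no operational equivalence $\tfrac{1}{2}P_{aa}+\tfrac{1}{2}P_{aa^{\perp}}\simeq\tfrac{1}{2}P_{bb}+\tfrac{1}{2}P_{bb^{\perp}}$, so your argument for the lemma (which needs $\mu_s+\mu_{s^\perp}=\mu_{s'}+\mu_{s'^\perp}$ for the pair in question) does not apply there. The equality $c_{aa,bb}=\omega(\mu_{aa},\mu_{bb})$ can genuinely fail for that pair: without the equivalence, $\xi_{bb}$ is unconstrained outside $S_{bb}\cup S_{bb^{\perp}}$ and may collect extra weight from $\mu_{aa}$, making $c_{aa,bb}$ strictly larger than the overlap. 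Fortunately, the final chain of inequalities only uses the one-sided bound $\omega(\mu_{aa},\mu_{bb})\le c_{aa,bb}$, i.e.\ $d(\mu_{aa},\mu_{bb})\ge 1-c_{aa,bb}$, and this direction follows from O1 alone: $\int \min\{\mu_{aa}(\lambda),\mu_{bb}(\lambda)\}\,d\lambda \le \int_{S_{bb}}\mu_{aa}(\lambda)\,d\lambda = \int_{S_{bb}}\mu_{aa}(\lambda)\xi_{bb}(\lambda)\,d\lambda \le c_{aa,bb}$, since $\xi_{bb}=1$ almost everywhere on $S_{bb}$. This is exactly how the paper treats that pair (it proves only the lower bound on $\|\mu_{aa}-\mu_{bb}\|$). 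Restate your lemma as an equality for the O2 pairs and a one-sided inequality for $(aa,bb)$, and your argument is complete and coincides with the paper's.
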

\begin{proof}
The first part of the proof essentially follows the argument given in Ref.~\cite{spekkens2005contextuality} Sec.~VIIIA and reproduced in Ref.~\cite{schmid2018contextual} Sec.~IVA, slightly adapted to use the fewer assumptions of the statement. We have that
\begin{equation*}
1= p(M_k|P_k) = \int_{S_k} d \lambda \mu_k(\lambda) \xi_{k}(\lambda), \quad k=s,s',
\end{equation*}
where $S_k$ denotes the support of $\mu_k$. From this equation, it follows that $\xi_{k}(\lambda) =1$ almost everywhere on $S_k$ (that is, modulo sets of measure zero). Furthermore,
\begin{equation*}
0 = p(M_k|P_{k^\perp}) = \int_{S_{k^\perp}} \mu_{k^\perp}(\lambda) \xi_{k}(\lambda),  \quad k=s,s',
\end{equation*}
from which it follows that $\xi_{k}(\lambda) =0$ almost everywhere on $S_{k^\perp}$. Hence, $S_k \cap S_{k^\perp} = \emptyset$ modulo sets of zero measure.

The operational equivalence of assumption~1 implies that in a noncontextual model 
\begin{equation}
\label{eq:inlemmaprepNC}
\mu_s(\lambda) + \mu_{s^\perp}(\lambda) = \mu_{s'}(\lambda) + \mu_{s'^\perp}(\lambda), \quad \forall \lambda \in \Lambda.
\end{equation}
Since $S_s \cap S_{s^\perp} = S_{s'} \cap S_{s'^\perp} = \emptyset$ modulo a set of zero measure, this implies $ \mu_s(\lambda) = \mu_{s'}(\lambda)$ for almost all $\lambda \in S_s \cap S_{s'}$. Hence, using the facts above, the $\ell_1$ norm distance between $\mu_s$ and $\mu_{s'}$ reads ($\| \mu_s - \mu_{s'}\| := \int d\lambda |\mu_{s}(\lambda) - \mu_{s'}(\lambda)|$).
\begin{align*}
\| \mu_s - \mu_{s'} \| & = \int_{\Lambda \backslash S_s} d \lambda \mu_{s'}(\lambda) + \int_{\Lambda \backslash S_{s'}} d \lambda \mu_{s}(\lambda)  \\ & = 2 - 2 \int_{ S_s \cap S_{s'} } d \lambda \mu_s(\lambda) \\& = 2 - 2 \int_{ S_s \cap S_{s'} } d \lambda \mu_s(\lambda) \xi_{s'}(\lambda).
\end{align*} 
Note that the last integral can be extended to $\Lambda$. In fact, by contradiction suppose that $\xi_{s'}(\lambda) \neq 0$ for some nonzero measure set $X \subseteq S_s \backslash S_{s'}$. Then, from Eq.~\eqref{eq:inlemmaprepNC}, it follows that, for almost all \mbox{$\lambda\in X$}, \mbox{$0 < \mu_s(\lambda) = \mu_{s'^\perp}(\lambda)$}. However, as we discussed $\xi_{s'}(\lambda) = 0$ almost everywhere on $S_{s'^\perp}$, which gives the desired contradiction. Hence the integral can be extended to $S_s \cup S_{s'}$ and, trivially, to all $\Lambda$. In conclusion,
\begin{equation}\label{eq:norm-confusab-ideal}
\| \mu_s - \mu_{s'} \| = 2 - 2 \int_{ \Lambda } d \lambda \mu_s(\lambda) \xi_{s'}(\lambda) = 2(1- c_{ss'}),
\end{equation}
where $c_{ss'} = p(M_{s'}|P_s)$.
Using the triangle inequality,
	\begin{equation*}
	\| \mu_{aa} - \mu_{bb} \| \leq \| \mu_{aa} - \mu_\alpha \| + \| \mu_{\alpha} - \mu_\beta \| + \| \mu_\beta - \mu_{bb}\|.
	\end{equation*}
	By definition, $\mu_{\alpha} (\lambda) = \int d\lambda'  T(\lambda|\lambda') \mu_a (\lambda')$, for a stochastic matrix $T(\lambda|\lambda')$. Similarly, $\mu_{\beta} (\lambda) = \int d\lambda' T(\lambda|\lambda') \mu_b (\lambda')$, with the same stochastic matrix. Since $\int d \lambda T(\lambda|\lambda') =1$ and $ T(\lambda|\lambda') \geq 0$, one can readily verify from the convexity of the absolute value that 
	$\| \mu_\alpha - \mu_\beta \| \leq \| \mu_a - \mu_b \|$ (data processing inequality), which implies
	\begin{equation}
	\label{eq:proofth1}
	\| \mu_{aa} - \mu_{bb} \| \leq \| \mu_{\alpha} - \mu_{aa} \| + \| \mu_a - \mu_b \| + \| \mu_\beta - \mu_{bb}\|.
	\end{equation}
	We can apply Eq.~\eqref{eq:norm-confusab-ideal} to each of the couples $(s,s')$ on the right hand side of Eq.~\eqref{eq:proofth1}, obtaining 
	\begin{equation}
	\label{eq:penultimate}
	 \| \mu_{aa} - \mu_{bb} \|  \leq 2(1-c_{\alpha aa}) + 2(1- c_{ab}) + 2(1- c_{\beta bb}).
	\end{equation}
	Let us now show $ \| \mu_{aa} - \mu_{bb} \| \geq 	2 (1- c_{aa,bb})$. First, notice that
	\begin{align*}
	\| \mu_{aa} - \mu_{bb} \| = \int\limits_{\substack{S_{aa} \backslash S_{bb}}}d\lambda\mu_{aa}(\lambda)+
					\int\limits_{\substack{S_{bb} \backslash S_{aa}}}d\lambda\mu_{bb}(\lambda)~+\\\int\limits_{\substack{R_1}}d\lambda(\mu_{aa}(\lambda)-\mu_{bb}(\lambda))+
	\int\limits_{\substack{R_2}}d\lambda(\mu_{bb}(\lambda)-\mu_{aa}(\lambda)),
	\end{align*}
	with $R_1:=\{\lambda\in S_{aa} \cap S_{bb}:\mu_{aa}(\lambda)\geq\mu_{bb}(\lambda)\}$ and $R_2:=(S_{aa} \cap S_{bb})\backslash R_1$. 
	Next,
	\begin{align*}
		\| \mu_{aa} - \mu_{bb} \|&=2\left(1-\int\limits_{\substack{R_1}}d\lambda\mu_{bb}(\lambda)-\int\limits_{\substack{R_2}}d\lambda\mu_{aa}(\lambda)\right)\\
		&\geq 2 -2\int\limits_{\substack{R_1\cup R_2=S_{aa} \cap S_{bb}}}d\lambda\mu_{aa}(\lambda)	\\
		&= 2-2\int\limits_{\substack{ S_{aa} \cap S_{bb}}}d\lambda\mu_{aa}(\lambda)\xi_{bb}(\lambda)\\
		&\geq 2(1-c_{aa,bb})
\end{align*}		
where the first inequality follows from $\mu_{aa}(\lambda)\geq\mu_{bb}(\lambda)~\forall\lambda\in R_1$ and the second equality follows from $\xi_{bb}(\lambda)=1$ almost everywhere in $S_{bb}$. Finally, substituting this in Eq.~\eqref{eq:penultimate} and rearranging the terms gives
\begin{equation*}
\frac{1}{2} c_{\alpha aa} + \frac{1}{2} c_{\beta bb} \leq 1- \frac{c_{ab}}{2} + \frac{c_{aa,bb}}{2}  
\end{equation*}
and since $F_g = \frac{1}{2} c_{\alpha aa} + \frac{1}{2} c_{\beta bb}$ the global
cloning achieved by non-contextual ontological models that comply with the operational
features O1-O2 is upper bounded as in Eq.~\eqref{eq:ncbound-ideal}.
\end{proof}

In Fig.~\ref{fig:quantumvsclassical} we compare the optimal quantum cloning (global) fidelity of Eq.~\eqref{eq:quantumbound} with the maximum noncontextual cloning fidelity of Eq.~\eqref{eq:ncbound-ideal}, taking into account that, in quantum experiments, one observes $c_{aa,bb}=c^2_{ab}$. One can see, for any \mbox{$0<c_{ab}<1$}, that quantum mechanics achieves higher copying fidelities than what is allowed by the principle of noncontextuality. Hence, the phenomenology of optimal cloning cannot be reproduced within noncontextual ontological models. Contextuality provides an advantage for the maximum copying fidelity.\footnote{Of course, when  $c_{ab} = 0$ - as it is for classical, i.e., orthogonal, states - both the quantum and the noncontextual fidelities are $1$.}

\begin{figure}[t]
\includegraphics[width=0.95\columnwidth]{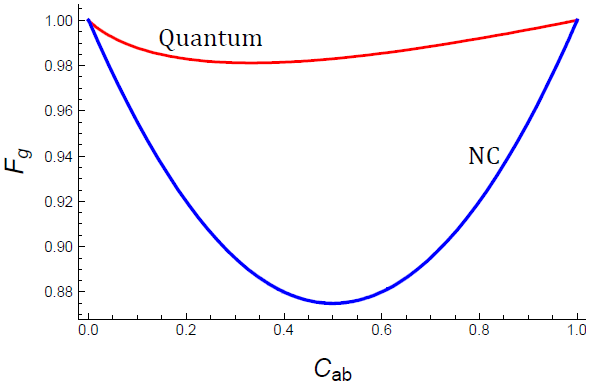}
\caption{Maximum tradeoff between cloning fidelity $F_g$ and confusability $c_{ab}$ allowed for noncontextual models (blue line, Eq.~\eqref{eq:ncbound-ideal}) versus optimal tradeoff achievable in quantum theory (red line, Eq.~\eqref{eq:quantumbound}).
\label{fig:quantumvsclassical}}
\end{figure}

Interestingly, the above derivation also gives an alternative, simple proof of the main result of Ref.~\cite{schmid2018contextual}. In fact, an intermediate technical result in the proof of Theorem~\ref{thm:cloningcontextual-ideal} is
that in the presence of the operational features O1-O2,
noncontextual models must have a direct relation between the experimentally accessible confusabilities $c_{ss'} = p(M_{s'}|P_s)$ and the $\ell_1$ distance between the corresponding probability distributions: 
\begin{equation}
\label{eq:confusabilitytracenorm-ideal}
\| \mu_s - \mu_{s'} \|  = 2(1- c_{ss'}).
\end{equation}
(This was implicitly shown in Ref.~\cite{schmid2018contextual} Sec. IVA, but using infinitely many extra operational assumptions. That is, they assume O2 for all pairs of orthogonal states).

Since the maximum probability $s_{ab}$ of distinguishing two preparations $P_a$ and $P_{b}$ is at most $1/2 +$ $\| \mu_a - \mu_{b}\|/4$,
it immediately follows $s_{ab}\leq 1- c_{ab}/2$, which is the optimal state discrimination probability in noncontextual models, as given in Ref.~\cite{schmid2018contextual}. Conversely, it is not immediately obvious how the techniques of Ref.~\cite{schmid2018contextual} could be adapted to obtain our result on cloning, due to our use of the data processing inequality in Theorem~\ref{thm:cloningcontextual-ideal}.

We also note that the noncontextual bound on cloning is tight. Denote by $S_s$ the support of $\mu_s$.  Consider a model in which $\mu_{ss} = \mu_s \mu_{s}$ and $\xi_s(\lambda)=1$ if $\lambda \in S_s$ and zero otherwise.  A cloning strategy that saturates the bound is as follows: if the input $\lambda$ is in $S_a \backslash S_b$, output $(\lambda,\lambda')$, with $\lambda'$ sampled according to $\mu_a$; otherwise, output $(\lambda, \lambda')$ with $\lambda'$ sampled according to $\mu_b$. Notice that this sets $\mu_{\beta} = \mu_b \mu_b$ and, hence, $c_{\beta bb}=1$ ($\mu_b$ is copied perfectly). On the other hand, $\mu_{\alpha}(\lambda,\lambda') = \mu_a(\lambda) \mu_a (\lambda')$ for  $\lambda \in S_a \backslash S_b$ and $\mu_{\alpha}(\lambda,\lambda') = \mu_a(\lambda) \mu_b (\lambda')$ for  \mbox{$\lambda \in S_a \cap S_b$} and, hence, 
\begin{align*}
c_{\alpha aa}&=\int d\lambda d\lambda' \mu_{\alpha}(\lambda,\lambda') \xi_{aa}(\lambda,\lambda')\\
&=\int_{S_a\times S_a} d\lambda d\lambda' \mu_{\alpha}(\lambda,\lambda')\\
&=\int_{(S_a\setminus S_b)\times S_a}d\lambda d\lambda'\mu_{a}(\lambda)\mu_{a}(\lambda')\quad+\\
&\qquad\qquad\int_{(S_a\cap S_b)\times S_a}d\lambda d\lambda'\mu_{a}(\lambda)\mu_{b}(\lambda')\\
&=(1-c_{ab})+c_{ab}\cdot c_{ba}=1-c_{ab}+c_{ab}^2,
\end{align*}
where, in the last equality, we use the operational fact that $c_{ab}=c_{ba}$. Finally, this gives \mbox{$F_g = \frac{1}{2}(1-c_{ab} + c_{ab}^2) + \frac{1}{2} = F^{\rm NC}_g$}. In Appendix~\ref{appendix:nc-strategy} we complete this strategy with a concrete choice of $\mu_a$, $\mu_b$, $\mu_{{aa}^\perp}$, $\mu_{{bb}^\perp}$, $\mu_{{\alpha}^\perp}$ and $\mu_{{\beta}^\perp}$ complying with O1 and satisfying Eq. \eqref{eq:noncontextualityconsequence} for all the operational equivalences in O2.

This optimal strategy seems to suggest the following intuition behind the theorem: our assumption of preparation noncontextuality on the input preparations $P_a$, $P_b$ imply that the distributions $\mu_a(\lambda)$ and $\mu_b(\lambda)$ overlap ``too much'' (formally, it implies maximal $\psi$-epistemicity, $c_{ab} = \int_{S_b} d\lambda \mu_a(\lambda)$ \cite{leifer2013maximally}), hence the cloning performance turns out worse than in quantum mechanics. Furthermore, noncontextuality implies that $\mu_a$ and $\mu_b$ coincide on their overlap, which implies a direct relation between $c_{ab}$ and the $\ell_1$ norm $\| \mu_a - \mu_b\|$. Crucially the latter cannot be increased by the cloning machine, since $\| \cdot\|$ decreases under post-processing.

However, this mechanism can only be part of the story. First, the cloning performance is not monotonically decreasing with increasing overlap, since for $c_{ab}=1$ one can clone perfectly. Second, cloning is defined as the creation of two independent copies of the preparations $P_a$ or $P_b$, but these do not necessarily correspond to two independent copies $\mu_a \mu_a$, $\mu_b \mu_b$ (this assumption, which we do not make, is called \emph{preparation independence} \cite{pusey2012reality}). Nevertheless, we showed that a no-go theorem results from the observed overlaps $c_{ab}$, $c_{aa,bb}$ and noncontextuality assumptions only as a consequence of information processing inequalities and the triangle inequality. 
	
	We note in passing that our proof technique can be abstracted and applied to other tasks as follows:
\begin{enumerate}
	\item First, given a set of observed overlaps $\{c_{ss'}\}$, noncontextuality applied to the operational equivalences $\frac{1}{2}P_s + \frac{1}{2}P_{s^\perp} \simeq \frac{1}{2} P_{s'} + \frac{1}{2}P_{s^{'\perp}}$ gives the equations~\eqref{eq:confusabilitytracenorm-ideal}.
\item Second, verify if the equations~\eqref{eq:confusabilitytracenorm-ideal} are compatible with triangle and data processing inequalities and the performance of quantum protocol under consideration (in this case, state-dependent cloning).
\end{enumerate}
In fact, the same proof technique can be extended to nonideal scenarios (with Eq.~\eqref{eq:confusabilitytracenorm-ideal} replaced by Eq.~\eqref{eq:trace-norm-confus-noisy}), as we now see.

\section{Optimal cloning is contextual - beyond idealizations}

Theorem~\ref{thm:cloningcontextual-ideal} is a no-go result for noncontextual ontological models aimed at explaining the phenomenology of state-dependent quantum cloning. However, the inequality derived in Eq. \eqref{eq:ncbound-ideal} is not a proper noncontextuality inequality because the operational features considered refer to an idealized experiment. In any real experiment, on the other hand, one will need to confront the following nonidealities: 
\begin{itemize}
\item The correlations in O1 will only approximatively hold in data collected in a real experiment.
\item O2 will only be approximatively realized.
\end{itemize}
Theorem~\ref{thm:cloningcontextual-noisy} below extends Theorem~\ref{thm:cloningcontextual-ideal} beyond the ideal limit, allowing for the observation of nonperfect correlations in O1, such as those generated by a cloning experiment carried out with nonideal preparations and test measurements. As we will discuss later, there are general techniques to deal with the idealization in O2, so that the problem of deriving an experimentally testable statement reduces to the elimitation of the idealization in O1. Specifically, we want to weaken it to
\begin{enumerate}
	 \item[O1ni] $p(M_s|P_s) \geq 1-\epsilon_s, \quad p(M_s|P_{s^\perp}) \leq  \epsilon_s$  for $s = a,b,\alpha,\beta,aa,bb$,
\end{enumerate}
where `ni' stands for `non-ideal'. 

\begin{thm}[Optimal cloning fidelity in noncontextual models -- noise-robust version]
\label{thm:cloningcontextual-noisy}
With the notation of Thm. \ref{thm:cloningcontextual-ideal}, suppose that one observes the operational features O1ni and O2. 
 Then, for any noncontextual model we have that
\begin{align}\label{eq:ncbound-noisy}
F_g \leq F^{\rm NC, ni}_g = 1- \frac{c_{ab}}{2} + \frac{c_{aa,bb}}{2}+ \emph{Err}.
\end{align}
where $\emph{Err} = \frac{1}{2}({\epsilon_{b}+2\epsilon_{bb}+\epsilon_{aa}})$.
\end{thm}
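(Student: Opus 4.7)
The plan is to follow the same architecture as the proof of Theorem~\ref{thm:cloningcontextual-ideal}: produce a bound on $\|\mu_{aa} - \mu_{bb}\|$ from above via the triangle and data-processing inequalities applied to $(\mu_a,\mu_b,\mu_\alpha,\mu_\beta)$, bound it from below by something proportional to $1 - c_{aa,bb}$, and rearrange. The only thing to replace is the crisp identity $\|\mu_s - \mu_{s'}\| = 2(1-c_{ss'})$ of Eq.~\eqref{eq:confusabilitytracenorm-ideal}, which used the clean support structure enforced by the ideal O1. Under O1ni the supports are no longer disjoint, so I would prove one-sided noisy analogs and track the errors.

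First, I would show that for every pair $(s,s') \in \{(a,b),(\alpha,aa),(\beta,bb)\}$ appearing in O2,
\begin{equation*}
\|\mu_s - \mu_{s'}\| \;\leq\; 2(1-c_{ss'}) + 2\epsilon_{s'}.
\end{equation*}
The key step is to use preparation noncontextuality (O2) to deduce the pointwise identity $\mu_s(\lambda) - \mu_{s'}(\lambda) = \mu_{s^{'\perp}}(\lambda) - \mu_{s^\perp}(\lambda)$, so that the positive part satisfies $(\mu_s - \mu_{s'})^+ \leq \mu_{s^{'\perp}}$. Since $\|\mu_s - \mu_{s'}\| = 2\int (\mu_s - \mu_{s'})^+\, d\lambda$, splitting the integrand as $\xi_{s'} + (1-\xi_{s'})$ gives
\begin{equation*}
\int (\mu_s - \mu_{s'})^+ \xi_{s'}\, d\lambda \;\leq\; \int \mu_{s^{'\perp}}\xi_{s'}\, d\lambda \;\leq\; \epsilon_{s'},
\end{equation*}
by O1ni, and $\int (\mu_s - \mu_{s'})^+ (1-\xi_{s'}) d\lambda \leq \int \mu_s (1-\xi_{s'}) d\lambda = 1 - c_{ss'}$. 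Summing yields the bound. This is the noisy replacement for Eq.~\eqref{eq:confusabilitytracenorm-ideal} and is, I expect, exactly Eq.~\eqref{eq:trace-norm-confus-noisy}.

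Second, I would derive the matching lower bound $\|\mu_{aa} - \mu_{bb}\| \geq 2(1-c_{aa,bb}) - 2\epsilon_{bb}$ directly from the dual characterization $\|\mu - \nu\| \geq 2\int \xi(\nu - \mu) d\lambda$ valid for any response function $\xi \in [0,1]$: taking $\xi = \xi_{bb}$ and using $\int \mu_{bb}\xi_{bb} \geq 1-\epsilon_{bb}$ and $\int \mu_{aa}\xi_{bb} = c_{aa,bb}$ from O1ni closes this step (no noncontextuality needed here).

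Finally, I would reassemble as in Theorem~\ref{thm:cloningcontextual-ideal}: $\|\mu_{aa} - \mu_{bb}\| \leq \|\mu_{aa} - \mu_\alpha\| + \|\mu_\alpha - \mu_\beta\| + \|\mu_\beta - \mu_{bb}\|$, apply the data-processing inequality $\|\mu_\alpha - \mu_\beta\| \leq \|\mu_a - \mu_b\|$, insert the three upper bounds from the first step (contributing errors $\epsilon_{aa}$, $\epsilon_b$, $\epsilon_{bb}$) and the lower bound from the second step (contributing $\epsilon_{bb}$), divide by $2$, and rearrange; the errors aggregate to $\tfrac{1}{2}(\epsilon_{aa} + \epsilon_b + 2\epsilon_{bb}) = \mathrm{Err}$, matching the claim. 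I expect the main obstacle to be the first step: in the ideal proof, disjoint supports forced $\mu_s = \mu_{s'}$ on $S_s \cap S_{s'}$ and gave equality for free, whereas here one must find the right way to use noncontextuality that yields an error depending only on $\epsilon_{s'}$ and not on $\epsilon_s$, $\epsilon_{s^\perp}$ or $\epsilon_{s^{'\perp}}$; the identity $\mu_s - \mu_{s'} = \mu_{s^{'\perp}} - \mu_{s^\perp}$ together with testing against $\xi_{s'}$ is precisely what achieves this and produces the clean form of $\mathrm{Err}$.
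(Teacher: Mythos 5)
Your proposal is correct and follows essentially the same route as the paper: the paper's Appendix~\ref{appendix:proof-noisy-thm} proves a two-sided, symmetrized version of your noisy bounds (Lemma~\ref{lem:noisy-symm}) via an explicit partition of $S_s\cup S_{s'}$ into regions, and then assembles the result exactly as you do via the triangle and data-processing inequalities, with the same error bookkeeping $\tfrac12(\epsilon_{aa}+\epsilon_b+2\epsilon_{bb})$. Your derivation of the upper bound through $(\mu_s-\mu_{s'})^+\leq \mu_{s^{'\perp}}$ split against $\xi_{s'}$ and $1-\xi_{s'}$, and of the lower bound through the dual characterization of the $\ell_1$ norm (which, as the paper also notes, needs no noncontextuality), is a more compact presentation of the same argument rather than a genuinely different one.
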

Note that, while we gave an independent and simpler proof of Theorem~\ref{thm:cloningcontextual-ideal}, we can now see it as a corollary of the result above once all error terms are set of zero. Another interesting case is when all error terms are equal, $\epsilon_b = \epsilon_{bb} = \epsilon_{aa} := \epsilon$, which gives 
$
F^{\rm NC,ni}_g = 1- \frac{c_{ab}}{2} + \frac{c_{aa,bb}}{2} + 2 \epsilon$.  In fact, we can give a slightly stronger and symmetric bound than the above. For the specific form, see Appendix~\ref{appendix:proof-noisy-thm}.

The proof of Theorem~\ref{thm:cloningcontextual-noisy} follows the same lines as that of Theorem~\ref{thm:cloningcontextual-ideal}. The key addition is to extend Eq.~\eqref{eq:confusabilitytracenorm-ideal} to the noisy setting. Specifically, we show that in the presence of the operational features O1ni-O2, noncontextual models must satisfy
\begin{equation}
\label{eq:trace-norm-confus-noisy}
		 |\|\mu_s - \mu_{s'}\|-2 (1- c_{ss'})| \leq 2 \epsilon_{s'},
	\end{equation}
and similarly if we exchange $s$ and $s'$.
In other words, the relation of Eq.~\eqref{eq:confusabilitytracenorm-ideal} holds approximatively, and we can bound its violation with the experimentally accessible noise level. The proof of this result is more involved than in the ideal scenario, so we postpone the derivation to Appendix~\ref{appendix:proof-noisy-thm}.
	
Eq.~\eqref{eq:trace-norm-confus-noisy} imposes a strict relation, in any noncontextual model and beyond the ideal scenario, between the $\ell_1$ distance of two epistemic states and their operationally accessible confusability. Hence, we anticipate that these relations will be of broader use to identify quantum advantages beyond state-dependent cloning. For instance, following the same reasoning given after Theorem~\ref{thm:cloningcontextual-ideal}, these inequalities provide an alternative and intuitive derivation of the tight noise-robust noncontextual bound on state discrimination of Ref.~\cite{schmid2018contextual},
$s_{ab}\leq \frac{1}{2} + \frac{1}{4} \|\mu_a - \mu_b\| \leq 1 - \frac{c_{ab}-\epsilon_b}{2}$.

\subsection{An explicit noise model}
 Having derived a noise-robust version of our noncontextual bound, the next step is to investigate whether quantum mechanics violates it. We consider a standard noise model in which the ideal quantum preparations, measurements and unitary transformation are all thwarted by a depolarizing channel $\mathcal{N}_v$ with noise level $v\in [0,1]$:
\begin{align*}
\mathcal{N}_v(\rho)=(1-v)~\rho+v \mathbb{I}/{4}.
\end{align*}

A direct calculation (see Appendix~\ref{appendix:quantum-model}) shows that this sets $\epsilon = v (31 - 21 v + 9 v^2)/16$ in Eq.~\eqref{eq:ncbound-noisy}. If one uses the unitary transformation that is optimal for state-dependent cloning in the noiseless setting, one gets a quantum strategy whose global average fidelity reads 
\begin{equation}\label{eq:quantum-value-noisy}
F_g^{\rm Q,noisy}(v):=(1-v)^3F_g^{\rm Q,opt}+\frac{1}{4}v(3-3v+v^2)
\end{equation}
which coincides with the optimal for $v=0$. For $v>0$, however, and unlike in the ideal case, the tradeoff between $c_{ab}$ and $F_g$ is not necessarily above the noncontextual bound. For example, for $v=0.015$ a violation can be observed only for $c_{ab} \in [0.318,0.718]$, see Fig.~\ref{fig:tradeoff-lambda-vs-confusability}. Nevertheless, a preliminary comparison with the experimental results of Ref.~\cite{mazurek16} suggests that the required low level of noise is not beyond current experiments. In fact, in terms of the parameter $C_s = 1/2\, p(M_s|P_s) + 1/2\, p(M_{s^\perp}|P_{s^\perp})$ defined in Ref.~\cite{mazurek16} ($C_s= 1$ in the ideal scenario), $v=0.015$ corresponds to $C_s \approx 0.9851$ for $s=a,b$ and $C_s \approx 0.9667$ for $s=aa,bb$, and Ref.~\cite{mazurek16} experimentally realized $C_s=0.9969$.

\begin{figure}[t]
	\centering
	\includegraphics[width=0.95\linewidth, height=5cm]{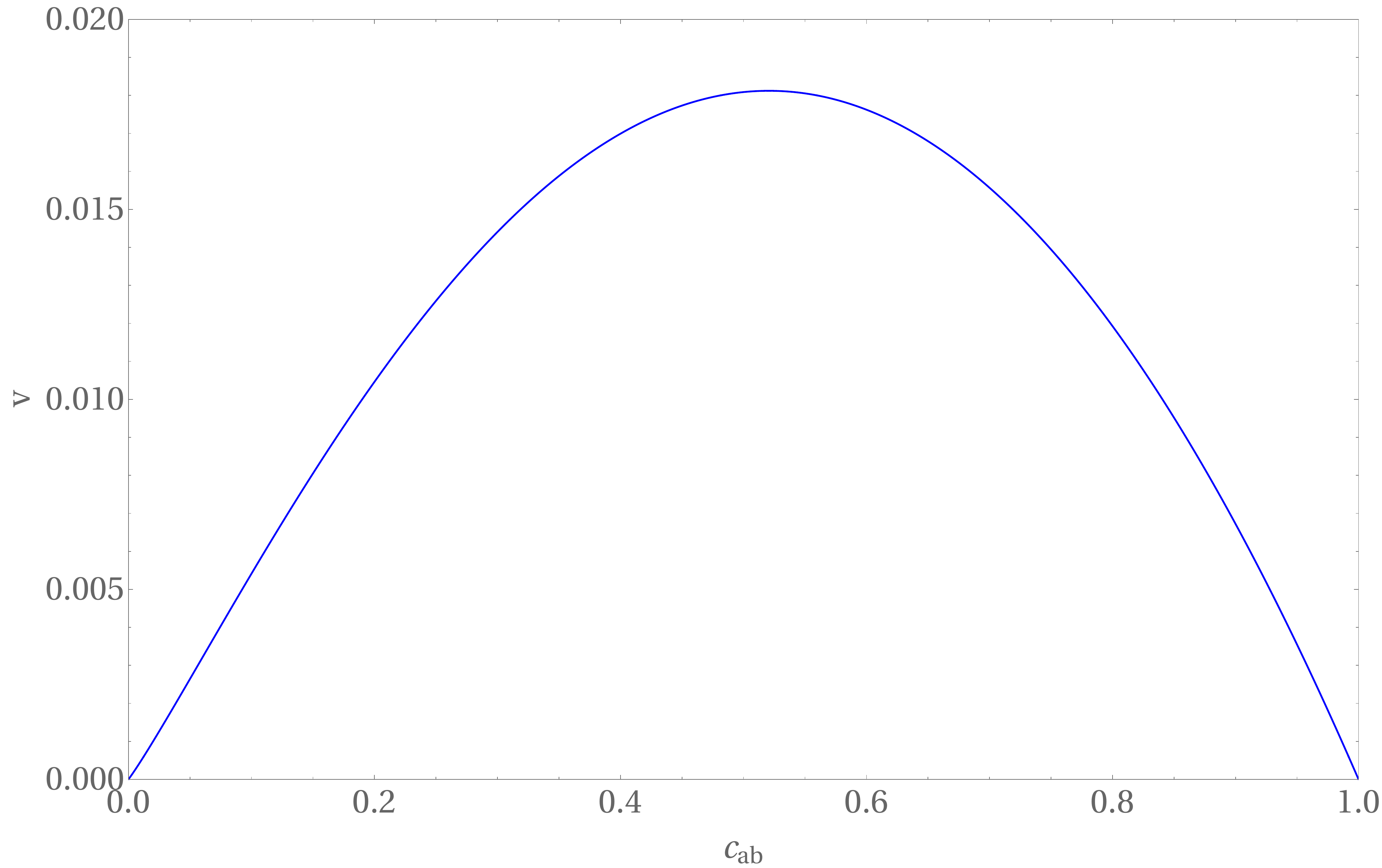}
	\caption{\emph{Noise-resistance of the quantum advantage in cloning.} This plot shows the maximum value of the noise parameter $v$ of a depolarizing channel (affecting preparations, measurements and transformation) for which the quantum value of the cloning fidelity (Eq.~\eqref{eq:quantum-value-noisy}) is above the noncontextual bound, as a function of the confusability between the inputs $c_{ab}$. 
	\label{fig:tradeoff-lambda-vs-confusability}}
\end{figure}

\subsection{Remaining assumptions}
As we mentioned, the only remaining idealization is the operational verification of O2. Let us suppose that, in an experiment, after doing tomography,\footnote{We will later discuss the assumption that can one access a tomographically complete set of measurements.} one determines that the actual experimental realizations of the ideal preparations are 
$P^{(1)}_a,P^{(1)}_{a^\perp},P^{(1)}_b,P^{(1)}_{b^\perp},P^{(1)}_{aa},P^{(1)}_{{aa}^\perp},P^{(1)}_{bb},P^{(1)}_{{bb}^\perp},P^{(1)}_{\alpha},P^{(1)}_{{\alpha}^\perp}$, $P^{(1)}_{\beta}$ and $P^{(1)}_{{\beta}^\perp}$. These 'primary' preparations will, in general, not respect the required operational equivalences in O2, due to unavoidable imperfections in the experimental realisation. Luckily, there are general considerations to tackle this idealization~\cite{mazurek16}. 

The first thing to notice is that if one can experimentally achieve a set of preparations  $P^{(1)}_s$, then one can also prepare any convex combination of them, i.e. any preparation in the convex hull $\mathcal{C}$ of the preparations $P^{(1)}_s$. By the linearity of Eq.~\eqref{eq:probability-rule}, one can then compute the measurement statistics of all the preparations in $\mathcal{C}$. Therefore, as put forward in Sec.~IV of Ref.~\cite{mazurek16}, to go ahead with the experimental verification of Theorem \ref{thm:cloningcontextual-noisy} one only needs to find `secondary' preparations $P^{(2)}_s$ in $\mathcal{C}$ whose measurement statistics satisfy the operational equivalences in O2; that is, we only need

\begin{enumerate}
	 \item[O2ni] O2 is satisfied for some preparations $P^{(2)}_s$ in the convex hull $\mathcal{C}$ of the experimental preparations $P^{(1)}_s$.
\end{enumerate}
This post-processing hence allows one to apply Theorem~\ref{thm:cloningcontextual-noisy} even if the collected data does not satisfy O2. One can think of the secondary preparations as noisy versions of the primary preparations. Hence, the price one pays in this construction is that the corresponding noise parameters $\epsilon'_s = p(M_s|P^{(2)}_s)$ in O1ni will in general be larger. Note that, even if $\epsilon'_s$ is too large compared to $\epsilon_s$ to see any violation in Theorem~\ref{thm:cloningcontextual-noisy}, one can get around this issue by adding extra experimental preparations $P^{(1)}_{\rm extra}$ to enlarge $\mathcal{C}$, as explicitly done in Ref.~\cite{mazurek16}. To summarize, there are good general tools to deal with imperfections in the operational equivalences O2.

 As a final remark, it is useful to briefly talk about loopholes. These are all those assumptions that cannot be \emph{conclusively} tested by any experimental means. In a nonlocality experiment, for example, these include the assumption that the two sides cannot communicate and the ability to choose the measurement freely, i.e., independently of any other variable relevant to the experiment. In a contextuality experiment the notion of operational equivalence relies on the knowledge of a tomographically complete set of measurements. However, if quantum theory is not correct, the tomographically complete set of a post-quantum theory may contain extra unknown measurements (just like a future theory may allow signalling). Recent work has shown that the problem can be mitigated by the addition of extra (known) measurements and preparations (see Ref.~\cite{pusey2019contextuality}), but this goes beyond the scope of the present work. 

\section{Conclusions and open questions.}
We have shown that the operational statistics observed in the optimal state-dependent quantum cloning is incompatible with the predictions of every noncontextual ontological model. In particular, for given overlap, the noncontextual global cloning fidelity is strictly smaller than the quantum prediction. A similar result continues to hold in more realistic experiments which are unavoidably affected by noise (while the effect can be `washed out' by excessive experimental imperfections). This identifies contextuality as the resource for optimal  state-dependent quantum cloning.

From a foundational point of view, it would be relevant to explore whether the relation between contextuality and cloning fidelity, 
that we proved for optimal state-dependent cloning, extends to the other types of imperfect cloning studied in the literature, mainly phase-covariant and/or universal cloning, as well as to probabilistic cloning \cite{scarani2005cloning}. From an applications' point of view, one important open question is if our noncontextual bound can be used to prove a contextual advantage for quantum information processing tasks which rely on optimal quantum state-dependent cloning (\emph{e.g.}, \cite{scarani2009security,deuar2000quantum}).

Finally, it may be possible to use the connection between $\ell_1$ norm and confusability developed here to understand what aspects of other quantum information primitives, such as quantum teleportation, are truly nonclassical. 

\bigskip

\emph{Acknowledgements.} We are grateful to Joseph Bowles for useful comments on a draft of this manuscript. We acknowledge financial support from the the European Union's Marie Sklodowska-Curie individual Fellowships (H2020-MSCA-IF-2017, GA794842), Spanish MINECO (Severo Ochoa SEV-2015-0522 and project QIBEQI FIS2016-80773-P), Fundacio Cellex and Generalitat de Catalunya (CERCA Programme and SGR 875).

\bibliographystyle{unsrtnat}
\bibliography{biblio}

\begin{thebibliography}{23}
\providecommand{\natexlab}[1]{#1}
\providecommand{\url}[1]{\texttt{#1}}
\expandafter\ifx\csname urlstyle\endcsname\relax
  \providecommand{\doi}[1]{doi: #1}\else
  \providecommand{\doi}{doi: \begingroup \urlstyle{rm}\Url}\fi

\bibitem[Wootters and Zurek(1982)]{wootters1982single}
William~K Wootters and Wojciech~H Zurek.
\newblock A single quantum cannot be cloned.
\newblock \emph{Nature}, 299\penalty0 (5886):\penalty0 802, 1982.
\newblock \doi{10.1038/299802a0}.

\bibitem[Dieks(1982)]{dieks1982communication}
D~Dieks.
\newblock Communication by {EPR} devices.
\newblock \emph{Physics Letters A}, 92\penalty0 (6):\penalty0 271--272, 1982.
\newblock \doi{10.1016/0375-9601(82)90084-6}.

\bibitem[Park(1970)]{park1970concept}
James~L Park.
\newblock The concept of transition in quantum mechanics.
\newblock \emph{Foundations of Physics}, 1\penalty0 (1):\penalty0 23--33, 1970.
\newblock \doi{10.1007/BF00708652}.

\bibitem[Yuen(1986)]{yuen1986amplification}
Horace~P Yuen.
\newblock Amplification of quantum states and noiseless photon amplifiers.
\newblock \emph{Physics Letters A}, 113\penalty0 (8):\penalty0 405--407, 1986.
\newblock \doi{10.1016/0375-9601(86)90660-2}.

\bibitem[Schmid and Spekkens(2018)]{schmid2018contextual}
David Schmid and Robert~W Spekkens.
\newblock Contextual advantage for state discrimination.
\newblock \emph{Physical Review X}, 8\penalty0 (1):\penalty0 011015, 2018.
\newblock \doi{10.1103/PhysRevX.8.011015}.

\bibitem[Spekkens(2005)]{spekkens2005contextuality}
Robert~W Spekkens.
\newblock Contextuality for preparations, transformations, and unsharp
  measurements.
\newblock \emph{Physical Review A}, 71\penalty0 (5):\penalty0 052108, 2005.
\newblock \doi{10.1103/PhysRevA.71.052108}.

\bibitem[Bartlett et~al.(2012)Bartlett, Rudolph, and
  Spekkens]{bartlett2012reconstruction}
Stephen~D Bartlett, Terry Rudolph, and Robert~W Spekkens.
\newblock Reconstruction of {G}aussian quantum mechanics from {L}iouville
  mechanics with an epistemic restriction.
\newblock \emph{Physical Review A}, 86\penalty0 (1):\penalty0 012103, 2012.
\newblock \doi{10.1103/PhysRevA.86.012103}.

\bibitem[Spekkens(2007)]{spekkens2007evidence}
Robert~W Spekkens.
\newblock Evidence for the epistemic view of quantum states: A toy theory.
\newblock \emph{Physical Review A}, 75\penalty0 (3):\penalty0 032110, 2007.
\newblock \doi{10.1103/PhysRevA.75.032110}.

\bibitem[Daffertshofer et~al.(2002)Daffertshofer, Plastino, and
  Plastino]{daffertshofer2002classical}
A~Daffertshofer, AR~Plastino, and A~Plastino.
\newblock Classical no-cloning theorem.
\newblock \emph{Physical review letters}, 88\penalty0 (21):\penalty0 210601,
  2002.
\newblock \doi{10.1103/PhysRevLett.88.210601}.

\bibitem[Bu{\v{z}}ek and Hillery(1996)]{buvzek1996quantum}
Vladimir Bu{\v{z}}ek and Mark Hillery.
\newblock Quantum copying: Beyond the no-cloning theorem.
\newblock \emph{Physical Review A}, 54\penalty0 (3):\penalty0 1844, 1996.
\newblock \doi{10.1103/PhysRevA.54.1844}.

\bibitem[Scarani et~al.(2005)Scarani, Iblisdir, Gisin, and
  Ac\'{\i}n]{scarani2005cloning}
Valerio Scarani, Sofyan Iblisdir, Nicolas Gisin, and Antonio Ac\'{\i}n.
\newblock Quantum cloning.
\newblock \emph{Rev. Mod. Phys.}, 77:\penalty0 1225--1256, Nov 2005.
\newblock \doi{10.1103/RevModPhys.77.1225}.

\bibitem[Saha et~al.(2019)Saha, Horodecki, and Paw{\l}owski]{saha2019state}
Debashis Saha, Pawe{\l} Horodecki, and Marcin Paw{\l}owski.
\newblock State independent contextuality advances one-way communication.
\newblock \emph{New Journal of Physics}, 21\penalty0 (9):\penalty0 093057,
  2019.
\newblock \doi{10.1088/1367-2630/ab4149}.

\bibitem[Tavakoli and Uola(2020)]{tavakoli2020measurement}
Armin Tavakoli and Roope Uola.
\newblock Measurement incompatibility and steering are necessary and sufficient
  for operational contextuality.
\newblock \emph{Phys. Rev. Research}, 2:\penalty0 013011, Jan 2020.
\newblock \doi{10.1103/PhysRevResearch.2.013011}.

\bibitem[Leifer(2014)]{leifer2014quantum}
Matthew~Saul Leifer.
\newblock Is the quantum state real? an extended review of $\psi$-ontology
  theorems.
\newblock \emph{Quanta}, 3\penalty0 (1):\penalty0 67--155, 2014.
\newblock \doi{10.12743/quanta.v3i1.22}.

\bibitem[Kochen and Specker(1975)]{kochen1975problem}
Simon Kochen and Ernst~P Specker.
\newblock The problem of hidden variables in quantum mechanics.
\newblock In \emph{The logico-algebraic approach to quantum mechanics}, pages
  293--328. Springer, 1975.
\newblock \doi{10.1007/978-94-010-1795-4_17}.

\bibitem[Spekkens(2019)]{spekkens2019ontological}
Robert~W Spekkens.
\newblock The ontological identity of empirical indiscernibles: {L}eibniz's
  methodological principle and its significance in the work of {E}instein.
\newblock 2019.
\newblock URL \url{https://arxiv.org/abs/1909.04628v1}.

\bibitem[Bru{\ss} et~al.(1998)Bru{\ss}, DiVincenzo, Ekert, Fuchs, Macchiavello,
  and Smolin]{bruss1998optimal}
Dagmar Bru{\ss}, David~P DiVincenzo, Artur Ekert, Christopher~A Fuchs, Chiara
  Macchiavello, and John~A Smolin.
\newblock Optimal universal and state-dependent quantum cloning.
\newblock \emph{Physical Review A}, 57\penalty0 (4):\penalty0 2368, 1998.
\newblock \doi{10.1103/PhysRevA.57.2368}.

\bibitem[Leifer and Maroney(2013)]{leifer2013maximally}
Matthew~S Leifer and Owen~JE Maroney.
\newblock Maximally epistemic interpretations of the quantum state and
  contextuality.
\newblock \emph{Physical review letters}, 110\penalty0 (12):\penalty0 120401,
  2013.
\newblock \doi{10.1103/PhysRevLett.110.120401}.

\bibitem[Pusey et~al.(2012)Pusey, Barrett, and Rudolph]{pusey2012reality}
Matthew~F Pusey, Jonathan Barrett, and Terry Rudolph.
\newblock On the reality of the quantum state.
\newblock \emph{Nature Physics}, 8\penalty0 (6):\penalty0 475--478, 2012.
\newblock \doi{10.1038/nphys2309}.

\bibitem[Mazurek et~al.(2016)Mazurek, Pusey, Kunjwal, Resch, and
  Spekkens]{mazurek16}
Michael~D. Mazurek, Matthew~F. Pusey, Ravi Kunjwal, Kevin~J. Resch, and
  Robert~W. Spekkens.
\newblock An experimental test of noncontextuality without unphysical
  idealizations.
\newblock \emph{Nat. Commun.}, 7:\penalty0 11780, Jun 2016.
\newblock \doi{10.1038/ncomms11780}.

\bibitem[Pusey et~al.(2019)Pusey, del Rio, and Meyer]{pusey2019contextuality}
Matthew~F. Pusey, Lídia del Rio, and Bettina Meyer.
\newblock {C}ontextuality without access to a tomographically complete set,
  2019.
\newblock URL \url{https://arxiv.org/abs/1904.08699v1}.

\bibitem[Scarani et~al.(2009)Scarani, Bechmann-Pasquinucci, Cerf, Du{\v{s}}ek,
  L{\"u}tkenhaus, and Peev]{scarani2009security}
Valerio Scarani, Helle Bechmann-Pasquinucci, Nicolas~J Cerf, Miloslav
  Du{\v{s}}ek, Norbert L{\"u}tkenhaus, and Momtchil Peev.
\newblock The security of practical quantum key distribution.
\newblock \emph{Reviews of modern physics}, 81\penalty0 (3):\penalty0 1301,
  2009.
\newblock \doi{10.1103/RevModPhys.81.1301}.

\bibitem[Deuar and Munro(2000)]{deuar2000quantum}
P~Deuar and WJ~Munro.
\newblock Quantum copying can increase the practically available information.
\newblock \emph{Physical Review A}, 62\penalty0 (4):\penalty0 042304, 2000.
\newblock \doi{10.1103/PhysRevA.62.042304}.

\end{thebibliography}

\onecolumn\newpage 

\appendix

\section{Noncontextual model saturating the bound in Theorem \ref{thm:cloningcontextual-ideal}}\label{appendix:nc-strategy}
To complement the cloning strategy given in the main text, in this section we give a concrete choice of distributions $\mu_{aa}$, $\mu_{bb}$, $\mu_{{aa}^\perp}$, $\mu_{{bb}^\perp}$ and $\mu_{{\alpha}^\perp}$ satisfying the operational features targeted by Theorem \ref{thm:cloningcontextual-ideal}.  The supports of all these distributions, which we set to be subsets of $[0,2]\times[0,2]$, are plotted in Figure \ref{fig:nc-distributions}. All the distributions are constantly $1$ on their support. Notice that since the cloning map given in the main text makes $\mu_\beta\equiv\mu_{bb}$, it follows that to satisfy the operational equivalence for $(\mu_{\beta},\mu_{bb})$ we must have $\mu_{{\beta}^\perp}\equiv \mu_{{bb}^\perp}$. We let the reader verify, by inspecting the plots, that the remaining requirements implied by the operational features O1 and O2 are satisfied by these distributions (and the choice of response functions made in the main text).

\begin{figure}[h!]
\centering
\begin{subfigure}{.5\textwidth}
  \centering
  \includegraphics[width=.65\linewidth]{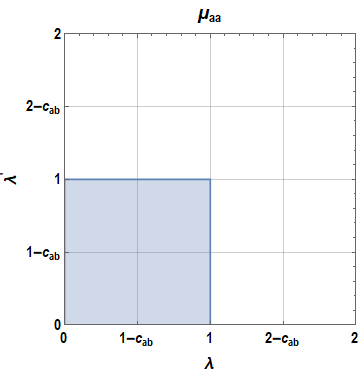}
  \caption{$S_{aa}$}
  \label{fig:sub1}
\end{subfigure}%
\begin{subfigure}{.5\textwidth}
  \centering
  \includegraphics[width=.65\linewidth]{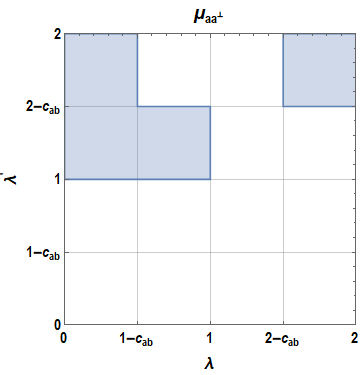}
  \caption{$S_{{aa}^\perp}$}
  \label{fig:sub2}
\end{subfigure}
\begin{subfigure}{.5\textwidth}
  \centering
  \includegraphics[width=.65\linewidth]{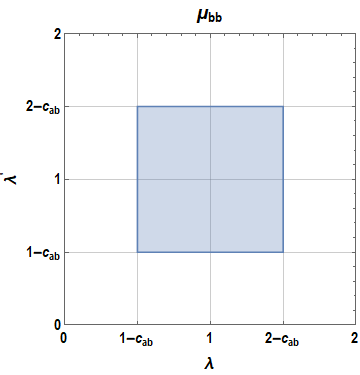}
  \caption{$S_{{bb}}$}
  \label{fig:sub3}
\end{subfigure}%
\begin{subfigure}{.5\textwidth}
  \centering
  \includegraphics[width=.65\linewidth]{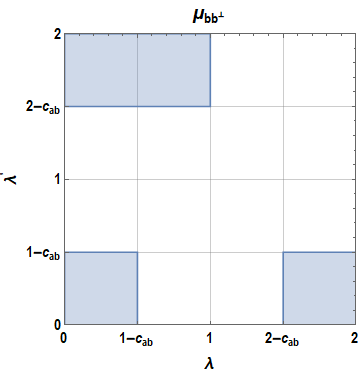}
  \caption{$S_{{bb}^\perp}$}
  \label{fig:sub4}
\end{subfigure}
  \begin{subfigure}{.5\textwidth}
  \centering
  \includegraphics[width=.65\linewidth]{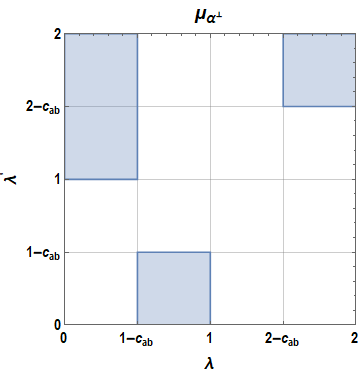}
  \caption{$S_{{\alpha}^\perp}$}
  \label{fig:sub3}
\end{subfigure}
\caption{Supports of distributions $\mu_{aa}$, $\mu_{bb}$, $\mu_{{aa}^\perp}$, $\mu_{{bb}^\perp}$ and $\mu_{{\alpha}^\perp}$ satisfying the restrictions imposed on noncotextual models by the requirements of Theorem \ref{thm:cloningcontextual-ideal}. The distributions are $1$-valued in the filled regions (i.e. in their support).}
\label{fig:nc-distributions}
\end{figure}

\section{Generalization and proof of Theorem~\ref{thm:cloningcontextual-noisy}} \label{appendix:proof-noisy-thm}
In this section we will prove a slightly stronger and more symmetric bound on the noncontextual cloning fidelity $F^{\rm NC}_g$ from which the bound in Thm. \ref{thm:cloningcontextual-noisy} in the main text follows straightforwardly as a corollary.
 
\begin{thm}
\label{thm:cloningcontextual-noisy-full}
With the notation of Thm. \ref{thm:cloningcontextual-ideal}, suppose that one observes the operational features O1ni and O2:
\begin{enumerate}
	\item[O1ni] $p(M_s|P_s) \geq 1-\epsilon_s, \quad p(M_s|P_{s^\perp}) \leq  \epsilon_s$  for $s = a,b,\alpha,\beta,aa,bb,$
	\item[O2] $\frac{1}{2} P_s + \frac{1}{2} P_{s^\perp} \simeq \frac{1}{2} P_{s'}+ \frac{1}{2} P_{s^{'\perp}}$, for all $(s,s')$ in $\{(a,b), (\alpha, aa), (\beta, bb) \}$.
\end{enumerate}
 Then, for any noncontextual model we have that
\begin{align}\label{eq:ncbound-noisy-appendix}
F^{\rm NC}_g \leq 1 +\frac{\min\{\epsilon_b-c_{ab},\epsilon_a  -c_{ba
}\}}{2}+\frac{\min\{c_{aa,bb}+\epsilon_{bb},c_{bb,aa}+\epsilon_{aa}\}}{2}+ \frac{\epsilon_{aa}+\epsilon_{bb}}{2}.
\end{align}
\end{thm}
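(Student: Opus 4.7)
My plan is to follow the proof of Theorem~\ref{thm:cloningcontextual-ideal} line by line, replacing the exact identity $\|\mu_s-\mu_{s'}\|=2(1-c_{ss'})$ by its noise-robust counterpart~\eqref{eq:trace-norm-confus-noisy},
\[ \bigl|\,\|\mu_s-\mu_{s'}\|-2(1-c_{ss'})\bigr|\leq 2\epsilon_{s'}, \]
together with the symmetric bound obtained by swapping $s\leftrightarrow s'$. The various $\min$'s in the statement simply record the freedom to choose either orientation. Once this key lemma is in hand, the rest (triangle inequality, data processing, lower bound on $\|\mu_{aa}-\mu_{bb}\|$) carries over essentially unchanged, so I expect the only real obstacle to be proving this noise-robust replacement using only O1ni and O2.

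For the lower bound on $\|\mu_s-\mu_{s'}\|$ I use that $1-\xi_{s'}\in[0,1]$ is an admissible test function, so that
\[ \tfrac12\|\mu_s-\mu_{s'}\|\geq\int(1-\xi_{s'})(\mu_s-\mu_{s'})\,d\lambda=p(M_{s'}|P_{s'})-c_{ss'}\geq 1-\epsilon_{s'}-c_{ss'}, \]
the last step being O1ni. For the upper bound, let $A^+=\{\mu_s>\mu_{s'}\}$ so that $\tfrac12\|\mu_s-\mu_{s'}\|=\int_{A^+}(\mu_s-\mu_{s'})$, and split via $1=(1-\xi_{s'})+\xi_{s'}$. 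The first piece is at most $\int(1-\xi_{s'})\mu_s=1-c_{ss'}$, since dropping the nonnegative $-\mu_{s'}$ contribution can only increase things. The key move is in the second piece: invoking the preparation-noncontextuality consequence~\eqref{eq:noncontextualityconsequence} of O2, I rewrite $\mu_s-\mu_{s'}=\mu_{s'^\perp}-\mu_{s^\perp}$, so that
\[ \int_{A^+}\xi_{s'}(\mu_s-\mu_{s'})=\int_{A^+}\xi_{s'}(\mu_{s'^\perp}-\mu_{s^\perp})\leq\int\xi_{s'}\mu_{s'^\perp}=p(M_{s'}|P_{s'^\perp})\leq\epsilon_{s'}. \]
Summing the two pieces gives $\tfrac12\|\mu_s-\mu_{s'}\|\leq (1-c_{ss'})+\epsilon_{s'}$. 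Without invoking O2 this decomposition would only yield the trivial bound ``$\leq 1$'', so this is where the actual work happens; the symmetric inequality follows by applying the same argument with the roles of $s$ and $s'$ interchanged.

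With~\eqref{eq:trace-norm-confus-noisy} established, the closing argument is a direct transcription of Thm.~\ref{thm:cloningcontextual-ideal}: (i)~the triangle inequality $\|\mu_{aa}-\mu_{bb}\|\leq\|\mu_{aa}-\mu_\alpha\|+\|\mu_\alpha-\mu_\beta\|+\|\mu_\beta-\mu_{bb}\|$; (ii)~data processing under the common stochastic map $T$, giving $\|\mu_\alpha-\mu_\beta\|\leq\|\mu_a-\mu_b\|$; (iii)~the upper bound from~\eqref{eq:trace-norm-confus-noisy} applied to $\|\mu_\alpha-\mu_{aa}\|$ and $\|\mu_\beta-\mu_{bb}\|$ in the orientation that produces the confusabilities $c_{\alpha aa},c_{\beta bb}$ appearing in $F_g$ (contributing the $\epsilon_{aa}+\epsilon_{bb}$ term); (iv)~the same upper bound applied to $\|\mu_a-\mu_b\|$ in the smaller of the two available orientations (producing the $\min\{\epsilon_b-c_{ab},\epsilon_a-c_{ba}\}$ term); (v)~the lower bound from~\eqref{eq:trace-norm-confus-noisy} applied to $\|\mu_{aa}-\mu_{bb}\|$, yielding $\|\mu_{aa}-\mu_{bb}\|\geq 2-2\min\{c_{aa,bb}+\epsilon_{bb},\,c_{bb,aa}+\epsilon_{aa}\}$. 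Chaining these, substituting $F_g=\tfrac12(c_{\alpha aa}+c_{\beta bb})$ and dividing by~$4$ produces exactly~\eqref{eq:ncbound-noisy-appendix}.
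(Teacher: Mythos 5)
Your proposal is correct and takes essentially the same route as the paper: the two-sided bound you establish is exactly Lemma~\ref{lem:noisy-symm}, and the closing chain (triangle inequality, data processing, and the choice of orientations producing the two $\min$'s) matches the paper's proof of Theorem~\ref{thm:cloningcontextual-noisy-full} step for step. Your derivation of the lemma via the variational characterization of the $\ell_1$ distance and the split $1=(1-\xi_{s'})+\xi_{s'}$ is a more streamlined packaging of the paper's four-region computation, but it rests on the same key move, namely substituting $\mu_s-\mu_{s'}=\mu_{s'^\perp}-\mu_{s^\perp}$ via noncontextuality and bounding the resulting term by $p(M_{s'}|P_{s'^\perp})\leq\epsilon_{s'}$.
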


For the proof of Theorem~\ref{thm:cloningcontextual-noisy-full}, we make use of the following lemma relating the $\ell_1$ distance of two epistemic states in any ontological model satisfying the hypothesis of the theorem and their operationally accessible confusability.

\begin{lem}\label{lem:noisy-symm}
	Let $P_s$, $P_{s'}$ be preparations. Suppose there exists preparations $P_{s^\perp}$, $P_{s'^\perp}$ and a two outcome measurement $M_s$ such that
	\begin{enumerate}
		\item $\frac{1}{2}P_s + \frac{1}{2}P_{s^\perp} \simeq  \frac{1}{2}P_{s'} + \frac{1}{2}P_{s'^\perp},$  \label{ass:opequivbis}		
		\item $p(M_k|P_k) \geq 1-\epsilon_k$, \quad $p(M_k|P_{k^\perp}) \leq \epsilon_k$, \quad $k=s,s'.$\label{ass:puritybis}
	\end{enumerate} 
	Then, in a noncontextual ontological model,
	\begin{equation}
		2 \max\{1- c_{ss'}-\epsilon_{s'},1- c_{s's}-\epsilon_{s}\} \leq \|\mu_s - \mu_{s'}\| \leq 2 \min\{1- c_{ss'}+\epsilon_{s'},1- c_{s's}+\epsilon_{s}\},
	\end{equation}\label{eq:trace-norm-confus-noisy-symm}
\end{lem}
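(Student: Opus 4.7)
The plan is to prove the two inequalities separately, exploiting two ingredients: the identity $\tfrac{1}{2}\|\mu-\nu\|_1 = 1 - \int\min(\mu,\nu)d\lambda$, and the pointwise consequence of preparation noncontextuality applied to assumption~1, namely $\mu_s - \mu_{s'} = \mu_{s'^\perp} - \mu_{s^\perp}$. The latter gives immediately the key identity $(\mu_s - \mu_{s'})_+ = (\mu_{s'^\perp} - \mu_{s^\perp})_+$, which will be used to ``transfer'' the positive excess of the epistemic difference onto a preparation that the response function $\xi_{s'}$ nearly rejects.

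For the lower bound I would use $\xi_{s'}\in[0,1]$ as a test function. Since $\int \mu_{s'}\xi_{s'}\geq 1-\epsilon_{s'}$ and $\int \mu_s \xi_{s'} = c_{ss'}$, one has $\int(\mu_{s'}-\mu_s)\xi_{s'}d\lambda \geq (1-\epsilon_{s'})-c_{ss'}$. Combining with the general bound $\int f(\mu_{s'}-\mu_s)d\lambda\leq \tfrac{1}{2}\|\mu_s-\mu_{s'}\|$ valid for any $f\in[0,1]$ yields $\|\mu_s-\mu_{s'}\|\geq 2(1-c_{ss'}-\epsilon_{s'})$. Repeating the argument with the roles of $s$ and $s'$ swapped (and $\xi_s$ in place of $\xi_{s'}$) gives the companion bound $\|\mu_s-\mu_{s'}\|\geq 2(1-c_{s's}-\epsilon_s)$, and the maximum of the two is the stated lower bound.

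For the upper bound the claim reduces, via the TV identity, to showing $\int\min(\mu_s,\mu_{s'})d\lambda \geq c_{ss'}-\epsilon_{s'}$. Decompose $\mu_s = \min(\mu_s,\mu_{s'}) + (\mu_s-\mu_{s'})_+$ and integrate against $\xi_{s'}$:
\begin{equation*}
c_{ss'} \;=\; \int \min(\mu_s,\mu_{s'})\,\xi_{s'}\,d\lambda \;+\; \int(\mu_s-\mu_{s'})_+\,\xi_{s'}\,d\lambda.
\end{equation*}
The first summand is at most $\int\min(\mu_s,\mu_{s'})d\lambda$ since $\xi_{s'}\leq 1$. For the second, noncontextuality gives $(\mu_s-\mu_{s'})_+ = (\mu_{s'^\perp}-\mu_{s^\perp})_+ \leq \mu_{s'^\perp}$ pointwise, so $\int (\mu_s-\mu_{s'})_+ \xi_{s'}\,d\lambda \leq \int \mu_{s'^\perp}\xi_{s'}\,d\lambda \leq \epsilon_{s'}$. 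Rearranging gives the claim, and hence $\|\mu_s-\mu_{s'}\|\leq 2(1-c_{ss'}+\epsilon_{s'})$. Swapping $s\leftrightarrow s'$ yields the second upper bound and the minimum gives the full statement.

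The main obstacle is the transfer step $(\mu_s-\mu_{s'})_+\leq \mu_{s'^\perp}$: this is precisely where noncontextuality does the work, turning the unknown positive part of $\mu_s-\mu_{s'}$ into a mass that is almost rejected by $M_{s'}$ and hence of total size $O(\epsilon_{s'})$ against $\xi_{s'}$. Everything else is straightforward bookkeeping. As a sanity check, in the ideal limit $\epsilon_s=\epsilon_{s'}=0$ the upper and lower bounds collapse to the equality $\|\mu_s-\mu_{s'}\|=2(1-c_{ss'})$ of Eq.~\eqref{eq:confusabilitytracenorm-ideal}, so Theorem~\ref{thm:cloningcontextual-ideal} will drop out as a direct corollary of Theorem~\ref{thm:cloningcontextual-noisy-full}.
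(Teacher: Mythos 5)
Your proof is correct and follows essentially the same route as the paper's: the identity $\tfrac{1}{2}\|\mu_s-\mu_{s'}\|=1-\int\min(\mu_s,\mu_{s'})\,d\lambda$ is exactly what the paper obtains via its partition into the regions $R_1,\ldots,R_4$, your lower bound is the same test-function argument (and, as in the paper, uses only assumption~2), and your upper bound hinges on the identical noncontextuality step $\mu_s-\mu_{s'}=\mu_{s'^\perp}-\mu_{s^\perp}$ followed by $\int\mu_{s'^\perp}\xi_{s'}\,d\lambda\le\epsilon_{s'}$. Your positive-part/$\min$ decomposition is simply a tidier packaging of the same argument.
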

\begin{proof}
We denote by $S_s$ the support of $\mu_s$. Define a partition $S_s \cup S_{s'} = \sqcup_{i=1}^4 R_i $, as summarized in Figure~\ref{fig:regions}:
	\begin{itemize}
	\item $R_1 = S_s \backslash (S_s \cap S_{s'})$, $R_{4} = S_{s'} \backslash (S_s \cap S_{s'})$.
	\item $R_2 = \{\lambda \in S_s \cap S_{s'} | \mu_s(\lambda) \geq \mu_{s'}(\lambda) \}$, $R_3 = \{\lambda \in S_s \cap S_{s'} | \mu_s(\lambda) < \mu_{s'}(\lambda) \}$.
	\end{itemize}

\begin{figure}[h!]
\begin{centering}
	\includegraphics[width=0.35\columnwidth]{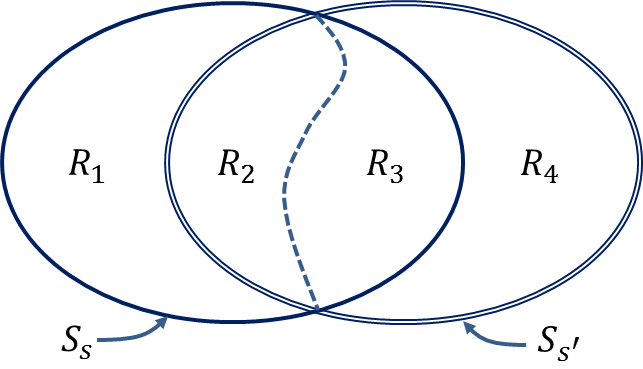}
	\caption{Sketch of the relevant regions in the proof of Lemma~\ref{lem:noisy-symm}.}
		\label{fig:regions}
\end{centering}
\end{figure}
Then,
	\begin{align}
	\| \mu_s - \mu_{s'} \| &=  \int d\lambda |\mu_s(\lambda) - \mu_{s'}(\lambda)| \nonumber \\
	&= \int_{R_1} d\lambda \mu_s(\lambda) + \int_{R_4} d\lambda \mu_{s'}(\lambda) + \int_{R_2} d\lambda [\mu_s(\lambda) - \mu_{s'}(\lambda)] + \int_{R_3} d\lambda [\mu_{s'}(\lambda) - \mu_{s}(\lambda)] \nonumber \\
	  &=  2- \int_{R_2 \cup R_3} d\lambda[\mu_{s}(\lambda) + \mu_{s'}(\lambda)] + \int_{R_2} d\lambda [\mu_s(\lambda) - \mu_{s'}(\lambda)] + \int_{R_3} d\lambda [\mu_{s'}(\lambda) - \mu_{s}(\lambda)] \nonumber \\
	  &=  2- 2  \int_{R_3} d\lambda \mu_s(\lambda) - 2 \int_{R_2} d\lambda \mu_{s'}(\lambda).
	  \label{eq:lemmanoisysym}
	\end{align}
Consider,
\begin{align*}
c_{ss'} - \int_{R_3} d\lambda \mu_s(\lambda) - \int_{R_2} d\lambda \mu_{s'}(\lambda) &= \int_{R_1\cup R_2\cup R_3} d\lambda \mu_{s} (\lambda) \xi_{s'}(\lambda)- \int_{R_3} d\lambda \mu_s(\lambda) - \int_{R_2} d\lambda \mu_{s'}(\lambda)\\
 &\leq \int_{R_1\cup R_2} d\lambda \mu_{s} (\lambda) \xi_{s'}(\lambda) - \int_{R_2} d\lambda \mu_{s'}(\lambda)\\
&= \int_{R_1\cup R_2} d\lambda [\mu_{s'}(\lambda)+\mu_{s^{'\bot}}(\lambda)-\mu_{s^{\bot}}(\lambda)]\xi_{s'}(\lambda)- \int_{R_2} d\lambda \mu_{s'}(\lambda)\\
&=- \int_{R_2} d\lambda \mu_{s'}(\lambda) \xi_{s^{'\perp}}(\lambda)  + \int_{R_1 \cup R_2} d\lambda  [\mu_{s^{'\perp}}(\lambda) - \mu_{s^\perp}(\lambda)]\xi_{s'}(\lambda) \\
&\leq \int_{R_1 \cup R_2}d\lambda  \mu_{s^{'\perp}}(\lambda) \xi_{s'}(\lambda) \leq \int d\lambda \mu_{s^{'\perp}}(\lambda) \xi_{s'}(\lambda) = p(M_{s'}|P_{s^{'\perp}}) \leq \epsilon_{s'},
\end{align*}
where we used $\xi_{s'} \leq 1$ in the first inequality and assumption~\ref{ass:opequivbis} and non-contextuality in the second equality. In the third equality, we used $\int_{R_1} d\lambda \mu_{s'}(\lambda) \xi_{s'}(\lambda)=0$ and in the final inequality we used assumption~\ref{ass:puritybis}.
Then, using Eq.~\eqref{eq:lemmanoisysym},
\begin{align*}
\| \mu_s - \mu_{s'} \| \leq 2(1-c_{ss'}+\epsilon_{s'}).
\end{align*}
Furthermore, recalling that $\xi_{s^{'\perp}} = 1- \xi_{s'}$,
\begin{align*}
\int_{R_3} d\lambda \mu_s(\lambda)+\int_{R_2} d\lambda \mu_{s'}(\lambda)-c_{ss'} & = 	\int_{R_3} d\lambda \mu_s(\lambda)+\int_{R_2} d\lambda \mu_{s'}(\lambda)-\int_{R_1\cup R_2\cup R_3} d\lambda\mu_{s}(\lambda)\xi_{s'}(\lambda) \\
& \leq \int_{R_3} d\lambda \mu_{s'}(\lambda) \xi_{s{'^\perp}}(\lambda) + \int_{R_2}d\lambda \mu_{s'}(\lambda)\xi_{s{'^\perp}}(\lambda) - \int_{R_1} d\lambda \mu_{s} \xi_{s'}(\lambda) \\
& \leq \int_{R_3} d\lambda \mu_{s'}(\lambda) \xi_{s{'^\perp}}(\lambda) + \int_{R_2}d\lambda \mu_{s'}(\lambda)\xi_{s{'^\perp}}(\lambda)\\
&\leq \int d\lambda \mu_{s'}(\lambda) \xi_{s^{'\perp}}(\lambda) = p(M_{s^{'\perp}}|P_{s'}) \leq \epsilon_{s'}.
\end{align*}
where in the first inequality we used that $\mu_{s}(\lambda) \leq \mu_{s'}(\lambda)$ in $R_3$ and $\mu_{s}(\lambda) \geq \mu_{s'}(\lambda)$ in $R_2$. In the final inequality, we used assumption~\ref{ass:puritybis}. Hence, we have that
\begin{align*}
c_{ss'}-\int_{R_3} d\lambda \mu_s(\lambda)-\int_{R_2} d\lambda \mu_{s'}(\lambda)\geq -\epsilon_{s'}
\end{align*}
and, using Eq.~\eqref{eq:lemmanoisysym}, that
\begin{align}\label{eq:lowerbound-onenorm-noisy}
\| \mu_s - \mu_{s'} \| \geq 2(1-c_{ss'}+\epsilon_{s'}).
\end{align}

Finally, noting that $\|\mu_s - \mu_{s'}\|=\|\mu_{s'} - \mu_s\|$ and that the above derivation is symmetric under the exchange of $s$ with $s'$ we arrive to the desired result
\begin{align}\label{eq:lowerbound-onenorm-noisy-full}
		2 \max\{1- c_{ss'}-\epsilon_{s'},1- c_{s's}-\epsilon_{s}\} \leq \|\mu_s - \mu_{s'}\| \leq 2 \min\{1- c_{ss'}+\epsilon_{s'},1- c_{s's}+\epsilon_{s}\},
\end{align}
Notice that for the lower bound in Eq.~\eqref{eq:lowerbound-onenorm-noisy} (and, hence, the left hand side of Eq. \eqref{eq:lowerbound-onenorm-noisy-full}) we did not use assumption \ref{ass:opequivbis} of operational equivalence.

\end{proof}

Given the above we can now prove Theorem~\ref{thm:cloningcontextual-noisy-full}:
\begin{proof}[Proof of Theorem~\ref{thm:cloningcontextual-noisy-full}]
In the first part we proceed as in the ideal case. From the triangle inequality and the contractivity of the $\ell_1$ norm under stochastic processes (which gives $\|\mu_{\alpha} - \mu_{\beta}\| \leq \|\mu_a - \mu_b\|$), one can show that the following equation holds (see Eq.~\eqref{eq:proofth1}):
\begin{equation}
	\| \mu_{aa} - \mu_{bb} \| \leq \| \mu_{\alpha} - \mu_{aa} \| + \| \mu_a - \mu_b \| + \| \mu_\beta - \mu_{bb}\|.
	\end{equation}
	Using both upper and lower bounds for the $\ell_1$ distance derived in Lemma~\ref{lem:noisy-symm}, this implies
	\begin{align*}
	2 \max\{1- c_{aa,bb}-\epsilon_{bb},1- c_{bb,aa}-\epsilon_{aa}\} \\  \leq 2(1-c_{\alpha aa}) + 2\epsilon_{aa} + 2\min\{1-c_{ab}+\epsilon_b,1-c_{ba}+\epsilon_a\} \\ + 2(1-c_{\beta bb}) + 2\epsilon_{bb},
		\end{align*}
	which can be rearranged to give the claimed bound on $F^{\rm NC}_g$.
\end{proof}

\section{Quantum violation of noise-contextual bound under depolarizing noise}\label{appendix:quantum-model}
\subsection{Introducing noise}
We will assume that all experimental procedures in the ideal quantum cloning experiment (that is, preparations, measurements and transformations) are affected by a depolarizing channel $\mathcal{N}_v$ with noise level $v\in [0,1]$:
\begin{align*}
\mathcal{N}_v(\rho)=(1-v)~\rho+v\frac{\mathbb{I}}{4}.
\end{align*}
Therefore, for $x\in\{a,b\}$, the ideal input preparations transform as
$$\ket{x0}\mapsto \NN_v(\ketbra{x0}{x0})=(1-v)\ketbra{x0}{x0}+v\frac{\Id}{4},$$
so that the actual input preparations become
$$\rho_x:=\Tr_2 \left[\NN_v(\ketbra{x0}{x0})\right]=(1-v)\ketbra{x}{x}+v\frac{\Id_{ab}}{2},$$
with $\Id_{ab}$ the projector over the ${\rm span}(\{\ket{a},\ket{b}\})$. 
The ideal cloning transformation $\UU$ becomes $\NN_v\circ \UU = (1-v)~\UU+v\chn$, where $\chn(\rho)=\mathbb{I}/4$ for all $\rho$. Hence, the actual outcomes $\chi\in\{\alpha,\beta\}$ correspondent to input $x \in \{a,b\}$ become
\begin{align*}
\rho_\chi&:=\mathcal{N}_v\circ \UU (\rho_x) = \left[(1-v)\UU + v\chn\right]\left((1-v)\ketbra{x0}{x0}+v\frac{\Id}{4}\right) 
=(1-v)^2\ketbra{\chi}{\chi}+(1-(1-v)^2)\frac{\Id}{4}.
\end{align*}

The actual target copies would become $\NN_v(\ketbra{xx}{xx})$. While this is the minimal amount of noise in this preparation required by our model, not all operational equivalences are satisfied under it. A simple (albeit likely not optimal) way to fix this issue is to let the noise act for a second step; hence, define
$$\rho_{xx}:= \NN_v \circ \NN_v(\ketbra{xx}{xx})=(1-v)^2\ketbra{xx}{xx}+(1-(1-v)^2)\frac{\Id}{4}.$$

Finally, for the ideal measurements, they transform as (for $x \in \{a,b\}$, $\chi \in   \{\alpha,\beta\}$),
\begin{align*}
\{\ketbra{x}{x},\Id_{ab}-\ketbra{x}{x}\}&\mapsto & M_{x} := &\left\{(1-v)\ketbra{x}{x}+v\frac{\Id_{ab}}{2},(1-v)(\Id_{ab}-\ketbra{x}{x})+v\frac{\Id_{ab}}{2}\right\},\\
\{\ketbra{xx}{xx},\Id-\ketbra{xx}{xx}\}&\mapsto & M_{xx} := &\left\{(1-v)\ketbra{xx}{xx}+v\frac{\Id}{4},(1-v)(\Id-\ketbra{xx}{xx})+v\frac{\Id}{4}\right\}, \\
\{\ketbra{\chi}{\chi},\Id-\ketbra{\chi}{\chi}\}&\mapsto & M_{\chi} := &\left\{(1-v)\ketbra{\chi}{\chi}+v\frac{\Id}{4},(1-v)(\Id -\ketbra{\chi}{\chi})+v\frac{\Id}{4}\right\}.
\end{align*}

\subsection{Orthogonal preparations and operational equivalences}
We now introduce the orthogonal preparations, necessary for the satisfaction of the operational equivalences. We start with the ones pertaining to the pair of input preparations $(a,b)$. For, $x\in\{a,b\}$, let
$$\rho_{x^\perp} := (1-v)\ketbra{x^\perp}{x^\perp}+v\frac{\Id_{ab}}{2},$$
with $\ket{x^\perp}\in  {\rm span}(\{\ket{a},\ket{b}\})$ and $\braket{x}{x^\perp}=0$. Note that these are naturally thought as the noisy version of the perfect orthogonal preparations, $\rho_{x^\perp} = \mathcal{N}_v (\ketbra{x^\perp 0}{x^\perp 0})$. Now, let us check that the operational equivalence is satisfied,
\begin{align*}
\frac{1}{2}\rho_a + \frac{1}{2}\rho_{a^\perp} &= (1-v)\frac{\ketbra{a}{a}+\ketbra{a^\perp}{a^\perp}}{2}+v\frac{\Id_{ab}}{2}=\frac{\Id_{ab}}{2}=(1-v)\frac{\ketbra{b}{b}+\ketbra{b^\perp}{b^\perp}}{2}+v\frac{\Id_{ab}}{2}=\frac{1}{2}\rho_b + \frac{1}{2}\rho_{b^\perp}.
\end{align*}
Next, we consider the pair of preparations $s \in (\alpha,aa)$. Let
\begin{align*}
\rho_{s^\perp}:=(1-v)^2\ketbra{s}{s}+(1-(1-v)^2)\frac{\Id}{4},
\end{align*}
with $\ket{s^\perp}\in  {\rm span}(\{\ket{\alpha},\ket{aa}\})$, and $\braket{s}{s^\perp}=0$. $\rho_{s^\perp}$ can be seen as the state resulting from preparing $\ket{s^\perp}$ and letting the noise channel act for two steps, i.e., $\rho_{s^\perp} = \mathcal{N}_v \circ \mathcal{N}_v(\ketbra{s^\perp}{s^\perp})$. 

We can now see that with this choice of states the operational equivalences are satisfied:
\begin{align*}
\frac{1}{2}\rho_\alpha + \frac{1}{2}\rho_{\alpha^\perp} & = (1-v)^2\frac{\ketbra{\alpha}{\alpha}+\ketbra{\alpha^\perp}{\alpha^\perp}}{2}+(1-(1-v)^2)\frac{\Id}{4}=(1-v)^2\frac{\Id_{\alpha aa}}{2}+(1-(1-v)^2)\frac{\Id}{4}\\
& =(1-v)^2\frac{\ketbra{aa}{aa}+\ketbra{aa^\perp}{aa^\perp}}{2}+(1-(1-v)^2)\frac{\Id}{4}
=\frac{1}{2}\rho_{aa} + \frac{1}{2}\rho_{aa^\perp},
\end{align*}
with $\Id_{\alpha aa}$ the projector over the ${\rm span}(\{\ket{\alpha},\ket{aa}\})$. Following the same argumentation, one can see that for the remaining pairs of preparations $(\beta,bb)$ and $(aa,bb)$, if we let
\begin{align*}
\rho_{\beta^\perp}&:=(1-v)^2\ketbra{\beta^\perp}{\beta^\perp}+(1-(1-v)^2)\frac{\Id}{4},\\
\rho_{bb^\perp}&:=(1-v)^2\ketbra{bb^\perp}{bb^\perp}+(1-(1-v)^2)\frac{\Id}{4},\\
\end{align*}
with $\ket{bb^\perp},\ket{\beta^\perp}\in  {\rm span}(\{\ket{\beta},\ket{bb}\})$ and $\braket{bb}{bb^\perp}=\braket{\beta}{\beta^\perp}=0$, the operational equivalence for $(\beta,bb)$ is satisfied:
\begin{align*}
\frac{1}{2}\rho_\beta + \frac{1}{2}\rho_{\beta^\perp} =\frac{1}{2}\rho_{bb} + \frac{1}{2}\rho_{bb^\perp}.
\end{align*}
And letting
\begin{align*}
\rho'_{aa^{\perp}}&:=(1-v)^2\ketbra{\overline{aa}^\perp}{\overline{aa}^\perp}+(1-(1-v)^2)\frac{\Id}{4},\\
\rho'_{bb^{\perp}}&:=(1-v)^2\ketbra{\overline{bb}^\perp}{\overline{bb}^\perp}+(1-(1-v)^2)\frac{\Id}{4},
\end{align*}
with $\ket{\overline{aa}^\perp},\ket{\overline{bb}^\perp}\in  {\rm span}(\{\ket{aa},\ket{bb}\})$, and $\braket{aa}{\overline{aa}^\perp}=\braket{bb}{\overline{bb}^\perp}=0$, the operational equivalence for $(aa,bb)$ is satisfied:
\begin{align*}
\frac{1}{2}\widetilde{\rho_{aa}} + \frac{1}{2}\rho'_{aa^{\perp}} =\frac{1}{2}\widetilde{\rho_{bb}} + \frac{1}{2}\rho'_{bb^{\perp}}.
\end{align*}
Notice that $\rho_{aa^{\perp}},\rho'_{aa^{\perp}}$ and $\rho_{bb^{\perp}},\rho'_{bb^{\perp}}$ are alternative choices of orthogonal preparations, tailored to each pair of preparations appearing in the operational equivalences.
\subsection{Noise parameter and Error term in the NC bound}
In this subsection, we find the expression for each of the measurement error probabilities appearing in the error term in Eq.~\eqref{eq:ncbound-noisy} as a function of the noise parameter $v$ of the depolarizing channel.

For $x\in\{a,b\}$,
\begin{align*}
1-\epsilon_x &= p(M_x|P_x) = \Tr \left[\left(\left(1-v\right)\ketbra{x}{x}+v\frac{\Id_{ab}}{2}\right)\left(\left(1-v\right)\ketbra{x}{x}+v\frac{\Id_{ab}}{2}\right)\right]\\
&=(1-v)^2+2v(1-v)\Tr\left[\frac{\ketbra{x}{x}}{2}\right]+v^2\Tr\left[\frac{\Id_{ab}}{4}\right]=(1-v)^2+v(1-v)+\frac{v^2}{2}\\
&\implies \epsilon_x=v-\frac{v^2}{2}.
\end{align*}
For $\chi\in\{\alpha,aa\}$,
\begin{align*}
1-\epsilon_\alpha & =  p(M_\alpha|P_\alpha) = \Tr[\rho_\alpha M_\alpha] = 1-\epsilon_{aa} =p(M_{aa}|P_{aa}) = \Tr[\rho_{aa}M_{aa}]\\
&=\Tr \left[\left(\left(1-v\right)^2\ketbra{\chi}{\chi}+\left(1-(1-v)^2\right)\frac{\Id}{4}\right)\left(\left(1-v\right)\ketbra{\chi}{\chi}+v\frac{\Id}{4}\right)\right]\\
&=(1-v)^3+\frac{(1-v)^2 v}{4}+\frac{(1-(1-v)^2)(1-v)}{4}+\frac{(1-(1-v)^2)v}{4}\\
&=\frac{1}{4}(4-9v+9v^2-3v^3)\\
&\implies \epsilon_{aa}=\epsilon_\alpha=\frac{3}{4}v(3-3v+v^2),
\end{align*}
and analogously for $\chi$ in $\{\beta,bb\}$ and $\{aa,bb\}$. Hence,
\begin{align*}
{\rm Err} &= \epsilon_\alpha + \epsilon_\beta + \epsilon_a + \epsilon_b + 2(\epsilon_{aa} + \epsilon_{bb}) = 2(v-\frac{v^2}{2})+6\cdot \frac{3}{4}v(3-3v+v^2)
= \frac{1}{2}v(31-29v+9v^2).
\end{align*}
Following the same arguments, it is easy to see that, for the case of symmetric confusabilities, the error term in Eq.~\eqref{eq:ncbound-noisy} becomes, as a function of $v$,
\begin{align*}
{\rm Err'}=\frac{1}{8}v(31-21v+9v^2)
\end{align*}
\subsection{Quantum performance}
In this last subsection, we compute the global average fidelity $F_g^{\rm Q}$ in the noisy setting of the optimal quantum cloner for the ideal setting as a function of the noise channel's parameter $v$.
\begin{align*}
F_g^{\rm Q} &= \frac{1}{2}\Tr [M_{aa}\rho_\alpha]+\frac{1}{2}\Tr [M_{bb}\rho_\beta]\\
&= \Tr \left[\left(\left(1-v\right)^2\ketbra{xx}{xx}+\left(1-(1-v)^2\right)\frac{\Id}{4}\right)\left(\left(1-v\right)\ketbra{\chi}{\chi}+v\frac{\Id}{4}\right)\right]\\
&= (1-v)^3|\braket{xx}{\chi}|^2+\frac{(1-(1-v)^2)(1-v)}{4}+\frac{(1-v)^2 v}{4}+ \frac{(1-(1-v)^2)v}{4}\\
&=(1-v)^3|\braket{xx}{\chi}|^2+\frac{1}{4}v(3-3v+v^2).
\end{align*}
Hence,
$$F_g^{\rm Q,noisy}(v):=(1-v)^3F_g^{\rm Q,opt}+\frac{1}{4}v(3-3v+v^2).$$

\end{document}